\documentclass[11pt,reqno]{amsart}
\usepackage{amssymb}
\usepackage{enumerate}
\usepackage{hyperref}
\usepackage{doi,url}

\usepackage{color}

\definecolor{purple}{rgb}{.5,0,1}
\definecolor{orange}{rgb}{1,.5,0}
\definecolor{pink}{rgb}{1,0,.5}

\newcommand\cS{\mathcal{S}}

\numberwithin{equation}{section}
\newtheorem{theorem}{Theorem}[section]

\newtheorem{lemma}[theorem]{Lemma}

\newtheorem{remark}[theorem]{Remark}
\newtheorem*{assumption}{Property DL}
\newtheorem*{teks}{Droplet localization}
\newtheorem*{teks2}{Dynamical exponential clustering}



\DeclareMathOperator{\supp}{supp}
\DeclareMathOperator{\tr}{tr}
\DeclareMathOperator{\Ran}{Ran}

\DeclareMathOperator{\dist}{dist}

\DeclareMathOperator{\Rea}{Re}

\newcommand{\nn}{\notag}

\newcommand\R{\mathbb R}
\newcommand\N{\mathbb N}
\newcommand\C{\mathbb C}
\newcommand\Z{\mathbb Z}

\newcommand\e{\mathrm{e}}

\newcommand\E{\mathbb E}

\newcommand\cL{\mathcal{L}}

\newcommand\cN{\mathcal{N}}

\newcommand\cH{\mathcal{H}}

\newcommand\eps{\varepsilon}

\renewcommand{\d}{\mathrm{d}}

\newcommand{\pr}{\prime}

\newcommand{\bom}{{\boldsymbol{{\omega}}}}

\newcommand\beq{\begin{equation}}
\newcommand\eeq{\end{equation}}

\newcommand{\abs}[1]{\left\lvert #1 \right\rvert}
\newcommand{\norm}[1]{\left\lVert #1 \right\rVert}
\newcommand{\scal}[1]{\left\langle #1 \right\rangle}
\newcommand{\set}[1]{\left\{ #1 \right\}}
\newcommand{\pa}[1]{\left( #1 \right)}

\newcommand{\br}[1]{\left [ #1 \right]}

\newcommand\Ups{\Upsilon}

\newcommand{\eq}[1]{\eqref{#1}}

\newcommand{\up}[1]{^{(#1)}}

\newcommand{\qtx}[1]{\quad\text{#1}\quad}
\newcommand{\mqtx}[1]{\; \ \text{#1}\; \  }
\newcommand{\sqtx}[1]{\;\text{#1}\;}

\newcommand{\lqtx}[1]{\quad\text{#1}}

\begin{document}

\title[Dynamical localization in the disordered XXZ  spin chain]{Manifestations of dynamical localization in the disordered XXZ  spin chain}

\author{Alexander Elgart}
\address[A. Elgart]{Department of Mathematics; Virginia Tech; Blacksburg, VA, 24061, USA}
 \email{aelgart@vt.edu}

\author{Abel Klein}
\address[A. Klein]{University of California, Irvine;
Department of Mathematics;
Irvine, CA 92697-3875,  USA}
 \email{aklein@uci.edu}

\author{G\"unter Stolz}
\address[G. Stolz]{Department of Mathematics; University of Alabama at Birmingham;
Birmingham, AL 35294-1170, USA}
\email{stolz@uab.edu}

\thanks{A.K. was  supported in part by the NSF under grant DMS-1001509.}

\date{Version of \today}

\begin{abstract}

We study disordered XXZ spin  chains in the Ising phase exhibiting droplet localization, a single cluster localization property  we previously proved for  random XXZ spin chains. It holds in  an energy interval $I$ near the bottom of the spectrum, known as the droplet spectrum. We establish dynamical manifestations of localization  in the energy window $I$, including non-spreading of information,  zero-velocity Lieb-Robinson bounds, and general dynamical clustering. Our results do not rely on  knowledge of the dynamical characteristics of the model outside the droplet spectrum. 
 A byproduct of our analysis is that   for random  XXZ spin  chains this droplet localization can happen only inside the droplet spectrum.
\end{abstract}

\maketitle

\tableofcontents

\section{Introduction}

We study disordered XXZ spin  chains in the Ising phase exhibiting droplet localization.
This is a single cluster localization property  we previously proved for  random XXZ spin chains inside the droplet spectrum \cite{EKS}.

The basic phenomenon of Anderson localization  in the single particle framework is that disorder can cause localization of electron states and thereby manifest itself in properties such as non-spreading of wave packets under time evolution and absence of dc transport. The mechanism behind this behavior is  well understood  by now, both physically and mathematically (e.g., \cite{And,EM,Kirsch, GKber,AW,EK}).   Many manifestations of single-particle Anderson localization 
remain valid  if one considers a fixed number of interacting 
particles, e.g., \cite{CS,AW2,KN}.

The situation is radically different in the many-body setting. Little is known about the thermodynamic limit of an interacting electron gas in a random environment, i.e., an infinite volume limit in which the number of electrons  grows proportionally to the volume. Even  simplest models where the individual particle Hilbert space is  finite dimensional (spin systems) pose considerable analytical and numerical challenges, due to the fact that the number of degrees of freedom involved grows exponentially fast with the  size of the system. 

The limited evidence from perturbative  \cite{AF,AGKL,GMP,AAB,PPZ,Imb} and  numerical \cite{GM,BR,HO,HP}  approaches supports the persistence of a many-body localized (MBL) phase for one-dimen\-sional spin systems  in the presence of weak interactions.  The numerics also suggests the existence of transition from a many-body localized (MBL) phase to delocalized phases as the strength of interactions increases,  \cite{HO,HP,BPM,BN,SPA}. 	

Mathematically rigorous results on localization in a true many-body system have been until very recently confined to investigations of exactly solvable (quasi-free) models (see \cite{KP,ARNSS,SW}). More recent  progress has been achieved primarily in the study of the XXZ spin chain, a system that is not integrable but yet amenable to rigorous analysis. The  first results in this direction established the exponential clustering property for zero temperature correlations of the Andr\'e-Aubry quasi-periodic model  \cite{Mas1,Mas2}. The authors recently proved localization results for   the random XXZ spin chain in the droplet spectrum \cite{EKS}.  Related results are given in \cite{BW}.

In  {\cite[Theorem~2.1]{EKS}},  the authors 
 obtained a strong localization result   for the  droplet spectrum eigenstates  of  the random XXZ spin chain  in the Ising phase. This result can be interpreted as the statement that a typical eigenstate in this part of the spectrum behaves as an effective quasi-particle,  localized, in the appropriate sense, in the presence of a random field.
 
In this paper we study  disordered XXZ spin  chains  exhibiting the same localization property we proved in {\cite[Theorem~2.1]{EKS}}, which we call Property DL (for ``droplet localization").
 We  draw  conclusions concerning the dynamics of the  spin  chain  based exclusively on
Property DL.

For completely localized many-body systems, the dynamical manifestation of localization is often expressed in terms of  the non-spreading  of information under the time evolution.   An alternative (and equivalent) description is  the zero-velocity Lieb-Robinson  bound.  (See, e.g, \cite{FHBSE}.)

There is, however, 
a difficulty in even formulating our results for  disordered XXZ spin  chains.  Property DL  only carries information 
about the structure of the eigenstates near the bottom of the spectrum,  and we cannot assume complete localization for all energies. Moreover, Theorem~\ref{thmrigid} below shows that Property DL can only hold inside the droplet spectrum for random XXZ spin chains, showing the near optimality of the interval in  [16, Theorem 2.1].
In fact, numerical studies suggest the presence of a mobility edge  for  sufficiently small disorder, \cite{HO,HP,BPM,BN}. To resolve this issue, we recast  non-spreading  of information and the zero-velocity Lieb-Robinson bound as a problem on the subspace of the Hilbert space associated with the given energy window in which Property DL holds. This leads to a number of interesting findings, formulated below in
Theorem~\ref{corquasloc}  (non-spreading of information), Theorem~\ref{thm:expclusteringgenJ}  
(zero-velocity Lieb-Robinson bounds), and Theorem~\ref{thmexpclust} (general dynamical clustering).

As we mentioned earlier, our methodology in \cite{EKS}  is limited to the states near the bottom of the spectrum and sheds light only on what physicists call zero temperature localization.  It is unrealistic to expect that this approach  can yield insight about extensive energies of magnitude comparable to the system size which is the essence of MBL. Nonetheless, we believe that the ideas presented here will be useful in understanding the transport properties of  interacting systems that have a mobility edge, such as the Quantum Hall Effect \cite{GP, EGS,GKS}.

Some of the results in this paper were announced in \cite{EKS2}.

This paper is organized as follows:  The model, Property DL, and the main theorems are stated in Section~\ref{secmodel}. We collect some technical results in Section~\ref{secprel},  and a lemma about spin chains is presented in  Appendix~\ref{appspinc}.
  Section~\ref{secopt} is devoted to the proof that Property DL only holds inside the droplet spectrum for random XXZ spin chains (Theorem~\ref{thmrigid}).  Non-spreading of information (Theorem~\ref{corquasloc}) is proven in Section~\ref{secnonsp}. Zero-velocity Lieb-Robinson bounds (Theorem~\ref{thm:expclusteringgenJ}) are proven in Section~\ref{secLR}. Finally, the proof of general dynamical clustering (Theorem~\ref{thmexpclust}) is given in Section~\ref{secdyncl}.

\section{Model and results}\label{secmodel}

The infinite disordered XXZ spin  chain  (in the Ising phase) is given by the (formal) Hamiltonian 
\beq \label{infXXZ}
H=H_\omega=H_0+ \lambda B_\omega,\quad H_0=\sum_{i\in\Z}h_{i,i+1},\quad B_\omega=\sum_{i\in\Z} \omega_i \mathcal{N}_i,
\eeq
acting on   $\bigotimes_{i\in \Z} \C_i^2$,  with $\C_i^2=\C^2$ for all $i\in \Z$, the quantum spin configurations on the one-dimensional lattice $\Z$, where  
\begin{enumerate}
\item $h_{i,i+1}$, the local next-neighbor Hamiltonian, is given by 
\beq 
h_{i,i+1}=\tfrac{1}{4}\pa{I-\sigma_i^z\sigma_{i+1}^z}-\tfrac{1}{4\Delta}\pa{\sigma_i^x\sigma_{i+1}^x+\sigma_i^y\sigma_{i+1}^y},
\eeq
where $\sigma^{x},\sigma^{y},\sigma^{z}$ are the standard Pauli matrices ($\sigma_i^{x},\sigma_i^{y},\sigma_i^{z}$  act on $\C_i^2$)  and $\Delta>1$ is a  parameter;
\item 
$\mathcal{N}_i = \tfrac{1}{2} (1-\sigma_i^z)
$
is the local number operator at site $i$
(the projection onto the down-spin state at site $i$); 

\item $\omega = \set{\omega_i}_{i\in\Z}$ are identically
distributed random variables whose joint probability distribution is ergodic with respect to shifts in $\Z$, and the single-site 
   probability distribution   $\mu$ satisfies  
   \beq
    \set{0,1}\subset \supp \mu\subset[0,1] \qtx{and}\mu(\set{0})=0;
    \eeq  
   
    \item  $\lambda > 0$ is the disorder parameter.
   \end{enumerate}
 If in addition   $\set{\omega_i}_{i\in\Z}$ are  independent  random variables we call $H_\omega$  a \emph{random} XXZ spin chain.
 \medskip

The choice $\Delta>1$  specifies the Ising phase.  The Heisenberg chain corresponds to $\Delta=1$, and  the Ising chain is obtained  in the limit $\Delta\to\infty$.

We set  $e_+ = \begin{pmatrix} 1  \\ 0  \end{pmatrix} $ and  $e_- = \begin{pmatrix} 0  \\ 1 \end{pmatrix} $, spin up  and spin down, respectively. 
 Recall $\sigma^{z} e_{\pm}= \pm  e_{\pm}$.  Thus, if $\mathcal{N} = \tfrac{1}{2} (1-\sigma^z)
$, we have $\mathcal{N} e_+=0$ and $\mathcal{N} e_-=e_-$.

The operator $H_\omega$ as in \eq{infXXZ} with $B_\omega \ge 0$ can be defined as an unbounded nonnegative self-adjoint operator as follows: 
Let $\cH_0$ be the vector subspace of $\bigotimes_{i\in \Z} \C_i^2$ spanned by tensor products of the form  $\bigotimes_{i\in \Z}e_i$,  $e_i\in \set{e_+,e_-}$, with a finite number of spin downs, equipped with   the tensor product inner product, and let $\cH$ be its Hilbert space completion. $H_\omega$,  defined in $\cH_0$ by \eq{infXXZ},  is an essentially self-adjoint operator on $\cH$.   Moreover, the ground state energy of $H_\omega$ is $0$, with the unique  ground state (or \emph{vacuum}) given by the all-spins up configuration $\psi_0= \otimes_{i\in \Z} e_{+}$.   Note that  $\cN_i \psi_0=0$ for all $i\in \Z$ and  $\norm{\psi_0}=1$.

The spectrum of $H_0$ is known to be of the form   \cite{NSt,FS} (recall $\Delta >1$):
\beq
\sigma(H_0)=\set{0} \cup \br{1 -\tfrac 1 \Delta, 1 +\tfrac 1 \Delta}  \cup \set{\left[2\pa{1 -\tfrac 1 \Delta},\infty\right )\cap\sigma(H_0) }.
\eeq
We will call  $I_1 =[1-\frac{1}{\Delta}, 2(1-\frac{1}{\Delta}))$   the \emph{droplet spectrum}.
(Droplet states in the Ising phase of the XXZ chain were first described in \cite{NSt} (see also \cite{NSS,FS}); they have energies in the interval $ \br{1 -\tfrac 1 \Delta, 1 +\tfrac 1 \Delta}$.
The pure droplet spectrum is actually $I_1\cap \sigma(H_0)$;  we call $I_1$ the droplet spectrum for convenience.)

 Since the  disordered XXZ spin chain Hamiltonian  $H_\omega$  is  ergodic with respect  to translation in $\Z$,  $B_\omega\ge 0$, and $B_\omega \psi_0=0$,  standard considerations imply that $H_\omega$ has nonrandom spectrum $\Sigma$, and 
\beq
\sigma(H_\omega)= \Sigma= \set{0} \cup \set{\left[1 -\tfrac 1 \Delta,\infty \right ) \cap \Sigma}\lqtx{almost surely}.
\eeq
(In the case of a  random XXZ spin chain Hamiltonian  $H_\omega$ with a continuous single-site probability distribution  standard arguments yield
$\Sigma=  \set{0} \cup \left[1 -\tfrac 1 \Delta,\infty \right )$.)

We consider the restrictions of $H_\omega$ to finite intervals $[-L,L]$, $L\in \N$ (We will write $[-L,L]$ for $[-L,L]\cap \Z$, etc., when it is clear from the context.)  We let  $\cH\up{L}=\cH_{[-L,L]}$,
where $\cH_S =\otimes_{i\in S}\C_i^2$ for $S\subset \Z$ finite,  and define  the self-adjoint operator  
\beq \label{finiteXXZ}
H^{(L)} =H_\omega ^{(L)}= \sum_{i=-L}^{L-1} h_{i,i+1} + \lambda\sum_{i=-L}^L \omega_i \mathcal{N}_i + \beta (\mathcal{N}_{-L} + \mathcal{N}_L) \qtx{on}\cH\up{L}.
\eeq
We take (and fix)  $\beta \ge \frac{1}{2}(1-\frac{1}{\Delta})$ in the boundary term, 
 which guarantees that  the random spectrum of $H_\omega ^{(L)}$ preserves the spectral gap of size $1 -\frac 1 \Delta$ above the ground state energy:
 \beq\label{gapcond}
 \sigma(H^{(L)}_\omega)= \set{0} \cup \set{\left[1 -\tfrac 1 \Delta,\infty \right ) \cap  \sigma(H^{(L)}_\omega)}.
 \eeq
The ground state energy of $H_\omega ^{(L)}$ is $0$,  with the all-spins up configuration state $\psi_0 ^{(L)}=\otimes_{i\in [-L,L]} \e_+ \in \cH\up{L}$ being a ground state, which is unique almost surely since $\sum_{i=-L}^L \omega_i \mathcal{N}_i \ne 0$ almost surely (which rules out  the all-spins down configuration in $H_\omega ^{(L)}$ as a ground state).

Given an interval $I$, we set $\sigma_I(H_\omega^{(L)})=\sigma(H_\omega^{(L)})\cap I$,  and  let  
\beq
G_I= \set{g:\R \to \C \mqtx{Borel measurable,}  \abs{g}\le  \chi_I}.
\eeq

In this article we consider a disordered XXZ spin  chain  as in \eq{infXXZ} for which we have localization in an interval  $\br{1 -\tfrac 1 \Delta, \Theta_1}$ in the following form, where $\norm{ \ }_1$ is the trace norm. 

\begin{assumption}   Let $H=H_\omega$ be a  disordered XXZ spin  chain.  There exist  $\Theta_1 > \Theta_0= 1 -\tfrac 1 \Delta$ and constants 
$C<\infty$ and $m>0$, such that, setting $I= [\Theta_0,\Theta_1]$,  we have, uniformly in $L$, 
\beq \label{eq:efcorDL}
 \E\pa{\sup_{g \in G_{ I} }\norm{\mathcal{N}_i g(H^{(L)}) \mathcal{N}_j}_1} \le C e^{-m|i-j|} \mqtx{for all}  i, j \in[-L, L].
\eeq
\end{assumption}

This property is justified because  we have proven its validity in the droplet spectrum \cite{EKS} for \emph{random}
XXZ spin  chains.  The name Property DL (for Droplet Localization) is further justified by 
Theorem~\ref{thmrigid} below.

  If $H=H_\omega$ is a random
XXZ spin  chain, then
 $H^{(L)}$ almost surely has simple spectrum. A simple analyticity based argument for this can be found in \cite[Appendix~A]{ARS}.  (The argument is presented there for the XY chain, but it holds for every random spin chain of the form $H_0 + \sum_{k=-L}^L \omega_k \mathcal{N}_k$ in $\bigotimes_{i\in [-L,L]}\C_i^2$.) Thus, almost surely, all its normalized eigenstates can be labeled as $\psi_E$ where $E$ is the corresponding eigenvalue.  In particular,
 \beq\label{norm1}
\norm{\mathcal{N}_i P_E^{(L)} \mathcal{N}_j}_1= \norm{\mathcal{N}_i\psi_E}\norm{\mathcal{N}_j\psi_E},
\eeq
where $P_E^{(L)}=\chi_{\set{E}}(H^{(L)})$ and $\norm{ \ }_1$ is the trace norm.

 Given  $0\le \delta< 1$, we  set
\beq \label{dropspec}
I_{1,\delta} = \left[ 1- \tfrac{1}{\Delta}, (2-\delta)\big(1-\tfrac{1}{\Delta}\big) \right];
\eeq
 note that  $I_{1,\delta}\subsetneq  I_1$ if $0<\delta <1$.
The following result  is proved in \cite{EKS}.

\begin{teks}[{\cite[Theorem~2.1]{EKS}}] 
 Let $H=H_\omega$ be a random XXZ spin chain whose single-site probability distribution is absolutely continuous with a bounded density.
There exists a constant $K>0$ with the following property: If $\Delta >1$, $\lambda >0$, and  $0<\delta< 1$ satisfy
\beq\label{lambdaDeltahyp}
 \lambda \pa{\delta(\Delta -1)}^{{\frac 1 2} } \min \set{1, \pa{\delta(\Delta -1)}}\ge K ,
 \eeq 
then  there exist constants 
$C<\infty$ and $m>0$ such that we have, uniformly in $L$, 
\beq \label{eq:efcor5}
 \E\pa{ \sum_{E\in \sigma_{ I_{1,\delta}}(H^{(L)})} \norm{\mathcal{N}_i\psi_E}\norm{\mathcal{N}_j\psi_E}} \le C e^{-m|i-j|} \mqtx{for all}  i, j \in[-L, L],
\eeq
and, as a consequence,
\beq \label{eq:efcor59}
 \E\pa{\sup_{g \in G_{ I_{1,\delta}}} \norm{\mathcal{N}_i g(H^{(L)}) \mathcal{N}_j}_1} \le C e^{-m|i-j|} \mqtx{for all}  i, j \in[-L, L].
\eeq
\end{teks}

The interval $I_{1,\delta}$ in \cite[Theorem~1.1]{EKS} is close to optimal, as the following theorem  shows that for a random XXZ spin chain  localization as in  \eq{eq:efcorDL} is only allowed in the droplet spectrum.

\begin{theorem}[Optimality of the droplet spectrum]\label{thmrigid}  Suppose Property DL  is valid for a random XXZ spin chain $H$.  Then
$\Theta_1\le 2 \Theta_0$, that is, if  $I$ is the interval in Property DL,  then we must have 
$I= I_{1,\delta}$ for some $0\le \delta <1$.
\end{theorem}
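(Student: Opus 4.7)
The plan is to argue by contradiction. Suppose $\Theta_1>2\Theta_0$ and set $\eta=\tfrac{1}{2}(\Theta_1-2\Theta_0)>0$, $W_\eta=[2\Theta_0,2\Theta_0+\eta]\subset I$. For each large $n$ I will construct a deterministic $2$-down-spin trial state $\phi_n$ for which the expected spectral mass $\E(\mu_{\phi_n}(W_\eta))$ is simultaneously bounded below by a positive constant (from the energy expectation of $\phi_n$) and forced to $0$ by Property DL as $n\to\infty$, yielding the desired contradiction.

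\textbf{Trial state.}  Fix an auxiliary scale $\ell\in\N$. For $n>\ell+1$ and $L\ge n+\ell$, set $W^\pm=[\mp n-\ell,\mp n+\ell]$ (disjoint and non-adjacent), and denote by $|jk\rangle$ the $2$-down-spin basis state with $e_-$ at $j,k$ and $e_+$ elsewhere. Define
\beq
\phi_n:=\tfrac{1}{2\ell+1}\sum_{j\in W^-,\,k\in W^+}|jk\rangle,
\eeq
a unit vector in the $2$-down-spin sector of $\cH^{(L)}$. A direct computation (the Ising term contributes $2$ per configuration, kinetic hopping within each window contributes $-\tfrac{2\ell}{\Delta(2\ell+1)}$, and there are no cross-window interactions) yields $\langle\phi_n,H_0^{(L)}\phi_n\rangle=2\Theta_0+\tfrac{2}{\Delta(2\ell+1)}$. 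By $0\in\supp\mu$ and independence of $\{\omega_k\}$, the event $\mathcal{A}_{n,\epsilon}:=\{\omega_k\le\epsilon\text{ for all }k\in W^-\cup W^+\}$ has probability bounded below by $p_{\epsilon,\ell}:=\mu([0,\epsilon])^{2(2\ell+1)}>0$, uniformly in $n$. On $\mathcal{A}_{n,\epsilon}$ the disorder contribution is at most $2\lambda\epsilon$ and the boundary term vanishes, so
\beq\label{Eup}
\langle\phi_n,H^{(L)}\phi_n\rangle\le 2\Theta_0+\delta,\quad \delta:=\tfrac{2}{\Delta(2\ell+1)}+2\lambda\epsilon.
\eeq

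\textbf{Two incompatible bounds on $b_n:=\mu_{\phi_n}(W_\eta)$.}  Write $a_n=\mu_{\phi_n}([\Theta_0,2\Theta_0))$ and $c_n=\mu_{\phi_n}((2\Theta_0+\eta,\infty))$. Since $\phi_n\perp\psi_0^{(L)}$ and $\sigma(H^{(L)})\setminus\{0\}\subset[\Theta_0,\infty)$ by \eqref{gapcond}, one has $a_n+b_n+c_n=1$; combining \eqref{Eup} with the trivial lower bound $\int E\,d\mu_{\phi_n}\ge\Theta_0 a_n+2\Theta_0 b_n+(2\Theta_0+\eta)c_n$ produces, on $\mathcal{A}_{n,\epsilon}$, the Chebyshev-type estimate
\beq\label{Cheb}
b_n\ge 1-\delta/\eta-a_n(1+\Theta_0/\eta).
\eeq
In the opposite direction, the identities $\mathcal{N}_j|jk\rangle=|jk\rangle$ and $\mathcal{N}_{k'}|j'k'\rangle=|j'k'\rangle$ give, for any $g\in G_I$, $|\langle jk,g(H^{(L)})j'k'\rangle|\le\norm{\mathcal{N}_j g(H^{(L)})\mathcal{N}_{k'}}_1$. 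Expanding $\langle\phi_n,\chi_J(H^{(L)})\phi_n\rangle=(2\ell+1)^{-2}\sum_{j,k,j',k'}\langle jk,\chi_J(H^{(L)})j'k'\rangle$ (with $J=[\Theta_0,2\Theta_0)$ or $W_\eta$) and using $|j-k'|\ge 2n-2\ell$, the triangle inequality and Property DL yield
\beq\label{DLup}
\E(a_n)+\E(b_n)\le 2C(2\ell+1)^2\,e^{-m(2n-2\ell)},
\eeq
since both $\chi_{[\Theta_0,2\Theta_0)}$ and $\chi_{W_\eta}$ lie in $G_I$ by $\Theta_1>2\Theta_0$.

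\textbf{Closing the argument.}  Fix $\ell$ large and $\epsilon$ small so that $\delta<\eta/2$. Integrating \eqref{Cheb} over $\mathcal{A}_{n,\epsilon}$ gives $\E(b_n)\ge\tfrac{1}{2}p_{\epsilon,\ell}-(1+\Theta_0/\eta)\E(a_n)$; by \eqref{DLup} the subtracted term is $\le p_{\epsilon,\ell}/4$ for $n$ large enough, forcing $\E(b_n)\ge p_{\epsilon,\ell}/4>0$. This contradicts the vanishing upper bound \eqref{DLup} on $\E(b_n)$, so $\Theta_1\le 2\Theta_0$; setting $\delta'=2-\Theta_1/\Theta_0\in[0,1)$ identifies $I$ with $I_{1,\delta'}$. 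The principal obstacle is that Property DL constrains only trace norms $\norm{\mathcal{N}_ig(H)\mathcal{N}_j}_1$, not scalar spectral weights of prescribed states, so one must funnel the required lower bound through a trial state whose position-basis expansion exposes the trace-norm decay, \emph{and} whose energy expectation is close enough to (but not below) $2\Theta_0$ that the Chebyshev bound concentrates its spectral mass inside $W_\eta$ rather than in the droplet band $[\Theta_0,2\Theta_0)$ --- this potential droplet-band leakage is itself controlled by Property DL applied to $\chi_{[\Theta_0,2\Theta_0)}\in G_I$.
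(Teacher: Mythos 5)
Your proof is correct and takes a genuinely different route from the paper. The paper proves Theorem~\ref{thmrigid} by Fourier-analytic machinery: it represents $P_0=h(H)$ with a Gevrey function $h$, uses the Hastings trick (Lemma~\ref{lemHast}, Lieb--Robinson plus Fourier transform estimates) together with Lemma~\ref{leminsertKf} to write $(X P_0 Y)_K$ in terms of time-evolved observables filtered through the droplet interval, then invokes Lemma~\ref{lemPXtgY} to get exponential decay of $\E\bigl(\|(\sigma^x_i P_0 \sigma^x_j)_K\|\bigr)$ in $|i-j|$, and contradicts this with the density-of-states lower bound of Lemma~\ref{lem:spillterms}, which relies on the single-particle Anderson-model picture of $H_1$. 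Your argument bypasses all of this: you build a deterministic two-droplet trial state with isolated down-spins in two far-apart windows, compute its energy expectation directly, use $0\in\supp\mu$ and independence to make the disorder term uniformly small on an event of positive probability (so a Chebyshev-type bound forces positive spectral mass into $[2\Theta_0,2\Theta_0+\eta]\subset I$), and then observe that Property DL applied with $g=\chi_J$ kills exactly that spectral mass because the two down-spins project through $\mathcal{N}_j$ and $\mathcal{N}_{k'}$ with $|j-k'|$ large. This is more elementary and entirely self-contained --- it needs only \eqref{gapcond} and \eqref{eq:efcorDL}, no Lieb--Robinson bound, no Fourier estimates, and no reduction to the $N=1$ Anderson chain. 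What the paper's route buys is the reusable infrastructure: Lemmas~\ref{lemHast}, \ref{leminsertKf}, and \ref{lem:spillterms} are needed again in the proofs of Theorems~\ref{thm:expclusteringgenJ} and~\ref{thmexpclust}, so for the paper it is economical to prove Theorem~\ref{thmrigid} as a corollary of that machinery rather than separately. Two tiny points to tidy in your write-up: you need $L>n+\ell$ (strict) so that neither window touches a boundary site, keeping both the boundary term $\beta(\mathcal{N}_{-L}+\mathcal{N}_L)$ and the domain-wall count unaffected; and the lower bound $\P(\mathcal{A}_{n,\epsilon})\ge p_{\epsilon,\ell}$ is in fact an equality under independence, which is available since Theorem~\ref{thmrigid} is stated for random (i.i.d.) chains.
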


\emph{Let   $H_\omega$  be a disordered  XXZ spin chain satisfying Property DL.}
We consider the intervals $I=[\Theta_0,\Theta_1]$ and $I_0= [0,\Theta_1]$, where  $\Theta_0,\Theta_1$ are given in  Property DL. We mostly omit $\omega$ from the notation. We write  $P^{(L)}_B=\chi_B(H^{(L)})$ for a Borel set
$B\subset \R$, and let $P^{(L)}_E=P^{(L)}_{\set{E}}$ for $E\in \R$. It follows from \eq{gapcond} that 
$P^{(L)}_{I_0}= P^{(L)}_0 + P^{(L)}_{I}$.  Since $\cN_i P^{(L)}_0=P^{(L)}_0\cN_i =0$ for all $i \in [-L.L]$, $G_I$ may be replaced by  $G_{I_0}$ in  \eq{eq:efcorDL}.  By   $m>0$ we will always  denote the constant in  \eq{eq:efcorDL}.    $C$ will always denote a constant, independent of the relevant parameters, which may vary from equation to equation, and even inside the same equation.

Given an interval  $J \subset [-L,L]$, a local observable $X$  with support $J$ is an operator on $\otimes_{j\in J} \C_j^2$, considered as an operator on $\cH^{(L)}$ by acting as the identity on spins not in $J$. (We defined supports as intervals for convenience.  Note  that we do not ask $J$ to be the smallest interval with this property, supports of observables are not uniquely defined.)  

Given a local observable $X$, we will generally  specify   a support for $X$, denoted by $\cS_X=[s_X,r_X] $.    We always assume $\emptyset \ne \cS_X  \subset [-L,L]$.   Given two local observables $X, Y$ we set $\dist(X,Y)= \dist (\cS_X,\cS_Y)$.

Given $\ell \ge 1$ and $B\subset [-L,L]$, we set $B_\ell = \set{j \in  [-L,L]; \ \dist\pa{j,B} }\le \ell$.
In particular, given a local observable $X$ we let
\beq\label{cSell}
\cS_{X,\ell}=  \pa{\cS_X}_\ell =[s_X-\ell,r_X +\ell] \cap [-L,L].
 \eeq

In this paper we derive several manifestations of dynamical localization for $H$  from Property DL.  The time evolution of a local observable under $ H^{(L)}$ is given by  
\beq
\tau^{(L)}_t\pa{X}=\e^{itH^{(L)}}X\e^{-itH^{(L)}} \qtx{for} t\in \R.
\eeq
(We also  mostly  omit  $L$ from the notation, and write $\tau_t$ for $\tau^{(L)}_t$. ) 

For a completely localized many-body system (i.e., localized at all energies), dynamical localization is often expressed as  the \emph{non-spreading of information under the time evolution}:  Given a local observable $X$, for all $\ell \ge 1$ and $t\in \R$ there is a local observable $X_\ell(t)$ with support $\cS_{X,\ell}$, such that  $\norm{X_\ell(t)- \tau_t\pa{X}} \le C\norm{X} \e^{-c \ell}$, with the constants $C$ and $c>0$ independent of $X$, $t$, and $L$.  Since we only have localization in the energy interval $I$, and hence also in $I_0$, we should only expect non-spreading of information  in these energy intervals.

 Thus, given an energy interval $J$, we consider the sub-Hilbert space  
$\cH\up{L}_ J=\Ran P^{(L)}_J$, spanned by the the eigenstates of $ H\up{L}$  with energies in $ J$, and localize an observable  $X$ in the  energy interval $J$ by considering
 its restriction to  $\cH\up{L}_ J$, 
$X_J= P_J\up{L} X P_J\up{L}$.  Clearly $\tau_t \pa{X_J\up{L}}= \pa{\tau_t \pa{X\up{L}}}_J$.

Property DL implies non-spreading of information in the energy interval $I_0$.

\begin{theorem}[Non-spreading of information]\label{corquasloc} Let  $H=H_\omega$  be a disordered  XXZ spin chain satisfying Property DL.
 There exists $C<\infty$, independent of $L$, such that for all local observables $X$, 
 $t\in \R$ and  $\ell >0$ there is a local observable  $X_\ell(t)=\pa{X_\ell(t)}_\omega $  with support $\cS_{X,\ell}$   satisfying
\begin{align} \label{qlocI}
\E \pa{\sup_{t\in \R}\norm{  \pa{X_\ell(t)  - \tau_t\pa{ X }}_{I_0}}_1} \le C  \|X\|\e^{- \frac{1}{16} m\ell}.
\end{align}
\end{theorem}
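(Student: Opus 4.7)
The natural candidate is to take $X_\ell(t)$ to be the conditional expectation, equivalently the Haar average over unitaries, of $\tau_t(X)$ onto the algebra of observables supported in $\cS_{X,\ell}$. Setting $\La:=\cS_{X,\ell}$ and $\La^c:=[-L,L]\setminus\La$, I would define
$$X_\ell(t):=\int_{\cU(\cH_{\La^c})} U\,\tau_t(X)\,U^*\,dU,$$
where $dU$ is the normalized Haar measure on the group of unitaries acting on $\cH_{\La^c}=\bigotimes_{i\in\La^c}\C_i^2$, extended by the identity on $\cH_\La$. By construction $X_\ell(t)$ commutes with every operator supported in $\La^c$, hence has support $\cS_{X,\ell}$, satisfies $\norm{X_\ell(t)}\le\norm{X}$, and its $\omega$-measurability is inherited from $\tau_t(X)$.

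Using the elementary identity $A-\int UAU^*\,dU=\int[A,U]U^*\,dU$, one obtains $(\tau_t(X)-X_\ell(t))_{I_0}=\int P_{I_0}[\tau_t(X),U]U^*P_{I_0}\,dU$, so that by the triangle inequality
$$\norm{(\tau_t(X)-X_\ell(t))_{I_0}}_1\le\sup_U\norm{P_{I_0}[\tau_t(X),U]U^*P_{I_0}}_1,$$
the supremum being over unitaries $U$ supported in $\La^c$. Setting $V:=\tau_{-t}(U)=\e^{-itH}U\e^{itH}$, a short computation exploiting $P_{I_0}\e^{\pm itH}=\e^{\pm itH}P_{I_0}$ gives
$$P_{I_0}[\tau_t(X),U]U^*P_{I_0}=\e^{itH}\,P_{I_0}(X-VXV^*)P_{I_0}\,\e^{-itH},$$
so by unitary invariance of the trace norm the task reduces to proving
$$\E\sup_{t,U}\norm{P_{I_0}(X-VXV^*)P_{I_0}}_1\le C\norm{X}\,\e^{-m\ell/16}.$$

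For this, I would invoke Property DL. Rewriting $X-VXV^*=[X,V]V^*$ and noting that $X$ is supported in $\cS_X$ (so $[X,U]=0$ on the nose and $[X,\cN_j]=0$ for $j\notin\cS_X$), the commutator $[X,V]$ effectively measures how much Heisenberg evolution has dragged the local $U$ into the region $\cS_X$. After an appropriate resolution-of-identity expansion adapted to the droplet structure — concretely, the one-particle decomposition $P_{I_0}=P_0+\sum_i\cN_iP_{I_0}$, valid because $\cN_i\psi_0=0$ and $\sum_i\cN_i$ acts as the identity on $\Ran P_I$ (which is contained in the one-particle sector in view of the droplet gap $\Theta_1\le 2\Theta_0$, as made rigorous by Theorem~\ref{thmrigid} in the random case) — the trace norm $\norm{P_{I_0}(X-VXV^*)P_{I_0}}_1$ is bounded by a sum of terms of the form $\norm{\cN_ig(H)\cN_j}_1$, with $g(E)=\chi_{I_0}(E)\,\e^{isE}\in G_{I_0}$ carrying the time-dependence (via factors $\e^{\pm isH}$ appearing when $V$ is unpacked) and with $i$ in or near $\cS_X$ and $j$ at distance $\ge\ell-O(1)$ from $\cS_X$. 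Property DL then provides $\E\norm{\cN_ig(H)\cN_j}_1\le C\e^{-m|i-j|}$; absolutely convergent summation in the one-dimensional lattice yields the stated bound, the $m/16$ loss being absorbed by polynomial prefactors from the expansion and from the accounting of boundary sites in $\partial\La$.

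The \emph{main obstacle} is the non-locality of $V=\tau_{-t}(U)$: while $U$ is strictly supported in $\La^c$ (so $[X,U]=0$), Heisenberg evolution spreads $V$ throughout the chain, and a naive Duhamel expansion $V=U-i\int_0^t\tau_{-s}([H,U])\,ds$ produces for $[X,V]$ only a norm estimate that grows linearly in $|t|$. The crucial point is that this apparent growth is defeated in trace norm once one projects onto $\Ran P_{I_0}$, precisely because Property DL is an $L^\infty_t$-statement: applied to $g(E)=\e^{itE}\chi_{I_0}(E)\in G_{I_0}$, it bounds $\norm{\cN_iP_{I_0}\e^{itH}\cN_j}_1$ uniformly in $t$ by $C\e^{-m|i-j|}$. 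The entire proof thus rests on extracting, through careful bookkeeping of the $\cN_i$-resolution around $\cS_X$ and of the local structure of $U$ on $\La^c$, precisely the correct powers of $\e^{-m|i-j|}$ supplied by Property DL; this combinatorial step is where the $1/16$ in the exponent is generated.
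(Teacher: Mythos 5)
Your construction of $X_\ell(t)$ is genuinely different from the paper's and, as written, the argument has real gaps. The paper does not Haar-average; it sets $\mathcal O=[-L,L]\setminus\cS_{X,\ell/2}$ and uses $P_+^{(\mathcal O)}\tau_t(X_{I_0})P_+^{(\mathcal O)}=\widetilde{\tau_t(X_{I_0})}\,P_+^{(\mathcal O)}$, i.e., the conditional expectation of $\tau_t(X_{I_0})$ against the all-up product state $\psi_{\mathcal O}$, and then further cuts the support with $P_+^{(\mathcal T)}$. This is not an aesthetic choice: the droplet spectrum lives on states that are all-up away from a bounded cluster, so pinning $\mathcal O$ to the vacuum preserves the low-energy structure, and the resulting error terms carry a factor of $P_-^{(\mathcal O)}$ that Lemma~\ref{lem:cordyn} controls directly. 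The Haar average over $\cU(\cH_{\La^c})$, by contrast, mixes in configurations with arbitrarily many down-spins on $\La^c$, which lie far above the droplet band, and there is no reason for $P_{I_0}$ to suppress them uniformly.

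Two concrete problems. First, the claimed "resolution of identity" $P_{I_0}=P_0+\sum_i\cN_iP_{I_0}$ is false: $\Ran P_I$ is not contained in the one-particle sector. Droplet states with any number $N\ge 1$ of consecutive down-spins have energy in $[1-\tfrac1\Delta,1+\tfrac1\Delta]$, so the droplet band intersects every $N$-particle sector, and $\cN P_I\ne P_I$. Theorem~\ref{thmrigid} says the DL interval lies inside the droplet spectrum, not that it lies inside the one-magnon sector. The paper avoids this by using only the operator inequality $P_-^{(S)}\le\sum_{i\in S}\cN_i$ (see \eqref{PsuppX}), which is what feeds into \eqref{P-gP-}; that is the correct way to convert the structure into $\|\cN_i g(H)\cN_j\|_1$ bounds. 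Second, and more seriously, your reduction to $\sup_U\|P_{I_0}(X-VXV^*)P_{I_0}\|_1$ requires a bound uniform in all unitaries $U$ on $\cH_{\La^c}$ (an exponentially large group), and the sketch never addresses the fact that $[U,P_{I_0}]\neq0$. Expanding $P_{I_0}U\tau_t(X)U^*P_{I_0}$ produces cross-terms $P_{I_0}U\bar P_{I_0}\tau_t(X)P_{I_0}U^*P_{I_0}$ and the like; for a generic $U$ (e.g., one flipping many spins in $\La^c$), $\|\bar P_{I_0}U^*P_{I_0}\|$ and $\|\bar P_{I_0}\tau_t(X)P_{I_0}\|_1$ are both $O(1)$, so there is no decay without additional cancellation that your sketch does not produce. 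The phrase "careful bookkeeping" is doing the entire work of the proof here; the estimate you would need, $\E\sup_{U,t}\|P_{I_0}[\tau_t(X),U]U^*P_{I_0}\|_1\lesssim e^{-m\ell/16}$, is at least as hard as the original theorem and may well require a different splitting than what you propose.
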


We  give an explicit expression for $X_\ell(t)$ in \eq{eq:X_ell}.  
Note that   $  X_I= \pa{ X_{I_0} }_I$, and hence \eq{qlocI} implies the same statement with $I$ substituted for $I_0$.

Another manifestation of dynamical localization is the existence of zero-velocity Lieb-Robinson (LR) bounds \emph{in the interval of localization}. The following theorem states a zero-velocity Lieb-Robinson  bound in the energy  interval $I$.
    If we include the ground state, i.e., if  we look for Lieb-Robinson type bounds in  the energy  interval $I_0$, the situation is more complicated, and the zero-velocity Lieb-Robinson  bound holds for the double commutator;   the commutator requires counterterms.   Note that 
    $ [\tau_t\pa{ X_I },Y_I] \ne \pa{ [\tau_t\pa{ X },Y]}_I$.  (We mostly omit   $\omega$ and $L$  from the notation.)

\begin{theorem}[Zero velocity LR bounds]\label{thm:expclusteringgenJ}  Let  $H=H_\omega$  be a disordered  XXZ spin chain satisfying Property DL. Let $X, Y$ and $Z$ be local observables.   The following holds  uniformly in $L$:

\begin{align}\label{eq:dynloc}
& \E \pa{\sup_{t\in \R} \norm{ [\tau_t\pa{ X_I },Y_I]}_1} \le  C \|X\| \|Y\| \e^{-\frac 1 8 m\dist (X,Y)},\\
\label{eq:LRquasimix}
&\E \pa{\sup_{t\in \R}\norm{\left[ \tau_t\pa{X_{I_0}},Y_{I_0}\right]- \pa{\tau_t\pa{ X}P_0Y - YP_0 \tau_t\pa{ X }}_I}_1} \\   & \hskip114pt \le C \|X\| \|Y\| \e^{-\frac 1 8m\dist (X,Y)},  \nn  \\
 \label{eq:LRquasimix2} 
&\E\pa{\sup_{t,s \in \R} {\norm{\left[\left[ \tau_t\pa{X_{I_0}}, \tau_s\pa{Y_{I_0}}\right], Z_{I_0}\right] }}_1} \\  \nn
& \hskip80pt \le C  \|X\| \|Y\|\|Z\|\e^{-\frac 1 8m \min\set{\dist (X,Y), \dist (X,Z), \dist (Y,Z)}}.
\end{align} 
Moreover,  for  the random   XXZ spin chain  the estimate \eq{eq:LRquasimix}    is not true  without the counterterms.
\end{theorem}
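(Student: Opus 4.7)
The plan is to deduce all three commutator bounds from Property DL by exploiting the following key observation. Since $\Theta_1\le 2\Theta_0$ (Theorem~\ref{thmrigid}) and the total number operator $\cN=\sum_k\cN_k$ commutes with $H$, the spectral projection $P_I\up{L}$ projects into the one-down-spin sector of $\cH\up{L}$, so $\cN$ acts as the identity on $\Ran P_I\up{L}$. This gives the operator identity $P_I=\sum_k\cN_kP_I=\sum_kP_I\cN_k$, and more generally $P_IAP_I=\sum_{i,j}\cN_iP_IAP_I\cN_j$ for every operator $A$. Combined with the fact that $\e^{\pm itH}\chi_I$ and products of unimodular functions with $\chi_I$ all lie in $G_I$, Property DL yields the time-uniform estimate $\E\pa{\sup_t\norm{\cN_i\,\e^{itH}\chi_I(H)\cN_j}_1}\le C\e^{-m|i-j|}$, which will be the sole analytic input.

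For \eq{eq:dynloc}, I would expand $[\tau_t(X_I),Y_I]=P_I\tau_t(X)P_IYP_I-P_IYP_I\tau_t(X)P_I$ using the $\cN$-identity on each interior $P_I$, and split the resulting sums according to whether the inserted indices lie in $\cS_X\cup\cS_Y$ or outside. Outside indices commute with $X$ or $Y$ and can be transported past the observable, and the two orderings of the commutator then produce cancellations eliminating these outside contributions. The surviving $O(\abs{\cS_X}\cdot\abs{\cS_Y})$ terms all have the form $\cN_ig_1(H)Xg_2(H)Yg_3(H)\cN_j$ with $g_\ell\in G_I$, $i\in\cS_X$, $j\in\cS_Y$ (and a symmetric set with $X,Y$ reversed); inserting further $\cN$-identities on the interior $g_\ell(H)$'s and applying H\"older factors each piece into a product of DL-controlled trace norms whose combined exponent is $\tfrac m8\dist(X,Y)$ after splitting the distance and absorbing the finitely many geometric steps. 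Summing the finitely many such terms yields \eq{eq:dynloc}.

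For \eq{eq:LRquasimix}, I would decompose $P_{I_0}=P_0+P_I$ and expand $X_{I_0}$ and $Y_{I_0}$ into four pieces each. Because $P_0=|\psi_0\ra\la\psi_0|$ is rank one, the ``double-$P_0$'' pieces $P_0XP_0$ and $P_0YP_0$ are scalar multiples of $P_0$ and vanish in the commutator. The ``single-$P_0$'' cross pieces $P_I\tau_t(X)P_0YP_I$ and their transposes are out of reach for Property DL, because $\cN_i\psi_0=0$ means no $P_0$-sandwiched trace norm is controlled by DL; they combine algebraically, however, into $\pa{\tau_t(X)P_0Y-YP_0\tau_t(X)}_I$, which is thus the unavoidable counterterm, while the residue reduces to $[\tau_t(X_I),Y_I]$ and is handled by the previous step. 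For the double commutator \eq{eq:LRquasimix2}, the outer bracket with $Z_{I_0}$ annihilates the rank-one obstruction: any triple product containing an isolated $P_0$ telescopes because $P_0AP_0$ is a scalar and commutes past $Z_{I_0}$, leaving only $P_I$-sandwiched residues; two iterated applications of Property DL together with the spin-chain lemma of Appendix~\ref{appspinc} then deliver the $\min$-over-distances exponent.

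The last claim, that the counterterm is truly unavoidable for the random XXZ chain, I would prove by taking $X=\sigma_i^-$, $Y=\sigma_j^+$ at $t=0$ with $\abs{i-j}$ large. Since $\sigma_i^-$ and $\sigma_j^+$ map $\cH^{(1)}$ outside itself, $X_I=Y_I=0$; and using $\sigma_j^+\psi_0=0$ one checks $X_{I_0}=P_I|i\ra\la\psi_0|$ and $Y_{I_0}=|\psi_0\ra\la j|P_I$. A direct expansion gives
\[[X_{I_0},Y_{I_0}]=P_I|i\ra\la j|P_I-\la j|P_I|i\ra\,P_0,\]
whose first summand is exactly the counterterm $\pa{XP_0Y-YP_0X}_I$, while the second has trace norm $\abs{\la j|P_I|i\ra}$ and decays by Property DL. Without the counterterm the raw commutator therefore has trace norm at least $\norm{P_I|i\ra}\norm{P_I|j\ra}-C\e^{-m\abs{i-j}}$. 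The main obstacle, and the only non-routine estimation in the theorem, is the uniform-in-$L$ lower bound $\E\pa{\norm{P_I|i\ra}\norm{P_I|j\ra}}\ge c>0$ for $\abs{i-j}$ large; this reduces to a Wegner-type statement on the one-particle Anderson model $H\up{L}|_{\cH^{(1)}}$ in the droplet band, which the structure of the random XXZ chain makes tractable by an ergodicity argument.
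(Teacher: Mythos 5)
The key observation in your first paragraph is incorrect, and it undermines the whole strategy. You claim that because $\Theta_1\le 2\Theta_0$, the projection $P_I^{(L)}$ projects into the one-down-spin sector so that $\cN=\sum_k\cN_k$ acts as the identity on $\Ran P_I^{(L)}$, giving the identity $P_I=\sum_k\cN_kP_I$. This is false: the droplet spectrum contains $N$-droplet states for \emph{every} $N\ge1$ (the lower edge of the $N$-droplet band decreases toward $1-\frac1\Delta$ as $N\to\infty$; see \cite{NSt,NSS,FS}), so $\Ran P_I$ contains vectors on which $\cN$ acts as multiplication by arbitrarily large integers. Had $\Ran P_I$ been contained in the one-particle sector, Property DL would reduce to a statement about the ordinary Anderson model and there would be no ``droplet'' phenomenon at all. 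Consequently the operator identity you build everything on — $P_IAP_I=\sum_{i,j}\cN_iP_IAP_I\cN_j$ — does not hold, and the expansions you propose for \eq{eq:dynloc} and \eq{eq:LRquasimix} do not go through.

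The paper's route is genuinely different. Instead of attempting to resolve $P_I$ via $\cN$, it decomposes observables according to the local projections $P_{\pm}^{(\cS_X)}$ (cf.~\eq{Xdecomp}) and uses the operator inequality $P_-^{(S)}\le\sum_{i\in S}\cN_i$ together with $P_+^{(S)}P_+^{(S^c)}=P_0$. After reducing to $X^{+,+}=Y^{+,+}=0$ by \eq{X++0}, the commutator bound \eq{eq:dynloc} is an immediate corollary of Lemma~\ref{lemPXtgY}, and \eq{eq:LRquasimix} follows by expanding $P_{I_0}=P_0+P_I$, observing $P_0XP_0=P_0YP_0=0$ in the normalized gauge, and controlling the cross terms with Lemmas~\ref{lem:cordyn} and \ref{lem:cordyn444}. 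No property of $\cN$ restricted to $\Ran P_I$ beyond \eq{eq:efcorDL} and \eq{PsuppX} is used.

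Your argument for the necessity of the counterterm inherits the same flaw (you assert $X_I=P_I\sigma_i^-P_I=0$, which requires the one-particle claim). Also, even granting the algebra, your proposal leaves the crucial lower bound $\E\big(\liminf_L\|P_I^{(L)}\delta_i\|\|P_I^{(L)}\delta_j\|\big)\ge c>0$ as a loose end (``a Wegner-type statement... tractable by an ergodicity argument''); this is precisely the content of the paper's Lemma~\ref{lem:spillterms}, which requires a careful argument via the one-particle restriction $H_1$, independence at large distances, and convergence to the density of states — it is not a small afterthought. The paper's proof of necessity also uses $\sigma_i^x,\sigma_j^x$ (which are self-adjoint, simplifying the Hilbert--Schmidt computations in Lemma~\ref{lem:spillterms}) and passes through $\|A\|_2^2\le\|A\|\,\|A\|_1$ to transfer the trace-norm assumption to the Hilbert--Schmidt lower bound \eq{eq:2terms0}, rather than working directly with $\|\cdot\|_1$.
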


  The counterterms in  \eq{eq:LRquasimix}   are generated by the interaction between the ground state and states corresponding to the energy interval $I$ under the dynamics.  Here, and also in Theorem~\ref{thmexpclust} below, they are linear combinations of terms of the form $ \pa{\tau_t\pa{ X}P_0Y}_I$ and  $\pa{YP_0 \tau_t\pa{ X }}_I$. 
 Note that 
\begin{align}\nn
\norm{\pa{\tau_t\pa{ X}P_0Y}_I}_1& =\norm{\pa{\tau_t\pa{ X}P_0Y}_I} =\norm{P_I Y^* \psi_0}\norm{P_I X \psi_0},\\
\norm{\pa{YP_0 \tau_t\pa{ X }}_I}_1& =\norm{\pa{YP_0 \tau_t\pa{ X }}_I} =\norm{P_I X^* \psi_0}\norm{P_I Y \psi_0},
\end{align}
which do not depend  on  either  $t$ or $\dist (X,Y)$.

Another manifestation of  localization is the dynamical exponential clustering property. Let   $B\subset \R$ be a Borel set.  We define   the  truncated time evolution   of an observable $X$  by  ($H=H_\omega^{(L)}$), 
\beq
\tau^B_t\pa{X}=\e^{itH_B}X\e^{-itH_B}, \qtx{where} H_B=P_B H.
\eeq
Note that  $\pa{\tau^B_t\pa{X}}_B= \pa{\tau_t\pa{X}}_B=\tau_t\pa{X_B}$.

The correlator operator  of two observables  $X$ and $Y$ in the energy window  $B$ is given by ($\bar P_B=1- P_B$)
\beq
R_{B}  (X,Y)=  P_B X \bar P_B Y P_B = \pa{X \bar P_B Y }_B.
\eeq
If $E$ is a simple eigenvalue with normalized eigenvector $\psi_E$, we have, with  $R_{E}  (X,Y) =R_{\set{E}}  (X,Y)$,
\begin{align}
{\tr \pa{R_{E}  (X,Y)} }&={\scal{\psi_E,XY\psi_E}-\scal{\psi_E,X\psi_E}\scal{\psi_E,Y\psi_E}}.
\end{align}

 The following result  is proved in \cite{EKS}.

\begin{teks2}[{\cite[Theorem~1.1]{EKS}}]Let $H=H_\omega$ be a random XXZ spin chain,
and assume \eq{eq:efcor5} holds in an interval $I$.  Then, for all local observables  $X$ and $Y$ we have, uniformly in $L$, 
\beq \label{eq:expclustering}
\E \pa{ \sup_{t\in \R}  \sum_{E\in \sigma_I(H^{(L)}) } \abs{\tr \pa{R_{E}  (\tau_t^I\pa{X},Y)} } }\le C \|X\| \|Y\| \e^{-m \dist\pa{ X, Y}},
\eeq
\beq \label{eq:expclustering2}
\E \pa{ \sup_{t\in \R}  \sum_{E\in \sigma_I(H^{(L)}) }    \abs{\tr \pa{R_{E}  (\tau_t\pa{X_I},Y_I)} } }\le C \|X\| \|Y\| \e^{-m \dist\pa{ X, Y}},
\eeq
and
\beq \label{eq:expclustering3}
\E \pa{ \sup_{t\in \R}    \abs{\tr \pa{R_{I}  (\tau_t^I\pa{X},Y)} } }\le C \|X\| \|Y\| \e^{-m \dist\pa{ X, Y}}.
\eeq
\end{teks2}

 The estimate \eq{eq:expclustering2} is not the same as \eq{eq:expclustering}, but it can be proven the same way; the proof of  \cite[Lemma~3.1]{EKS}   is actually simpler in this case. 
 
Since
 \beq
 \tr \pa{R_{I}  (\tau_t^I\pa{X},Y)}  =   \sum_{E\in \sigma_I(H^{(L)}) } \scal{\psi_E,\tau_t^I\pa{X}  \bar P_{I}  Y \psi_E},
 \eeq
 \eq{eq:expclustering3} is a statement about the  diagonal elements of the correlator operator 
 $R_{I}  (\tau_t^I\pa{X},Y) $.  We will now state a more general dynamical   clustering
result that is not restricted to diagonal elements.  The result, which holds in an interval of localization satisfying the conclusions of Theorem~\ref{thmrigid},  requires counterterms.

\begin{theorem}[General dynamical clustering]  \label{thmexpclust} Let  $H=H_\omega$  be a disordered  XXZ spin chain satisfying Property DL.  Fix an interval
$K= [\Theta_0, \Theta_2]$, where  $ \Theta_0 < \Theta_2 <\min\set{2 \Theta_0, \Theta_1 }$,  and  $\alpha\in(0,1)$.  There exists  $\tilde m>0$, such that
  for all  local observables  $X$ and $Y$ we have, uniformly in $L$,
\begin{multline}\label{eq:expclusteringgen'}
\E \pa{\sup_{t\in \R}\norm{R_K\pa{ \tau^K_t\pa{ X },Y} - \pa{\tau^K_t(X)P_0 Y +\tau^K_t\pa{Y} P_0 X }_K}}\\   C\pa{1 + \ln\pa{\min\set{\abs{\cS_{X}},\abs{\cS_{Y}}}}} \|X\| \|Y\|  \e^{-\tilde m \pa{\dist (X,Y)}^\alpha},
\end{multline}
and
\begin{align}\label{eq:dynloc3333}
&\E \pa{\sup_{t\in \R} \norm{ \pa{ [[\tau^K_t\pa{ X },Y]]}_K }}\\  \nn & \hskip30pt \le  C\pa{1 + \ln\pa{\min\set{\abs{\cS_{X}},\abs{\cS_{Y}}}}} \|X\| \|Y\|  \e^{-\tilde m \pa{\dist (X,Y)}^\alpha},
\end{align}
where 
\begin{align}\label{[[]]}
&[[\tau^K_t\pa{ X },Y]]= 
 [\tau^K_t\pa{ X },Y] \\& \nn \hskip30pt - \pa{\tau^K_t\pa{ X }P_0Y +\tau^K_t\pa{Y} P_0 X} + \pa{ Y P_0\tau^K_t\pa{X}  +  X P_0\tau^K_t\pa{Y} } .
\end{align}
Moreover,  for  the random   XXZ spin chain  the estimates \eq{eq:expclusteringgen'} and \eq{eq:dynloc3333}  are not true  without the counterterms.
\end{theorem}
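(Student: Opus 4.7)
The plan is to exploit the structural property enforced by the hypothesis $\Theta_2 < 2\Theta_0$: $\Ran(P_K)$ consists of perturbed single-droplet states, and $\bar P_K = P_0 + P_{(\Theta_2,\infty)}$. A direct computation gives
\[
R_K\pa{\tau^K_t(X),Y} - \pa{\tau^K_t(X) P_0 Y}_K = P_K \tau^K_t(X) P_{(\Theta_2,\infty)} Y P_K,
\]
so the task reduces to showing that the right-hand side is, modulo the stated error, equal to $-\pa{\tau^K_t(Y) P_0 X}_K$. The spectrum $\sigma(H)\cap (\Theta_2,\infty)$ consists of higher single-droplet energies in $(\Theta_2,\Theta_1]$ (handled by Property~DL) together with multi-droplet states starting at $2\Theta_0$ (inaccessible from a single local observable without bridging $\dist(X,Y)$). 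The second counterterm $\pa{\tau^K_t(Y) P_0 X}_K$ should emerge when one transposes the roles of $X$ and $Y$ using $\pa{P_K \tau^K_t(X) P_{(\Theta_2,\infty)} Y P_K}^{\!*}$ and reintroduces the $P_0$ component on the opposite side.

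The proof would combine three ingredients: Property~DL applied to $g$ supported in $I$ to control $\norm{\cN_i g(H)\cN_j}_1$; the non-spreading of information (Theorem~\ref{corquasloc}) to approximate $\tau^K_t(X)_{I_0}$ by a local observable $X_\ell(t)$ with support $\cS_{X,\ell}$ and error $\le C\norm{X}\e^{-m\ell/16}$; and the spin-chain lemma in Appendix~\ref{appspinc}, which decomposes an arbitrary local observable into a sum over $i\in\cS_X$ of operators carrying a $\cN_i$-factor so that Property~DL can act. Once $\ell < \dist(X,Y)$, the commutator $[X_\ell(t),Y]=0$, and the only surviving residue can be rewritten as a sum over $i\in\cS_X$, $j\in\cS_Y$ of quantities of the form $\cN_i P_K \cN_j$ and $\cN_i P_{(\Theta_2,\infty)}\cN_j$; the former decay like $\e^{-m|i-j|}$ by Property~DL, and the latter are handled via a resolvent identity connecting $P_{(\Theta_2,\infty)}$ to the complement of $P_K$. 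The estimate (\ref{eq:dynloc3333}) is then deduced from (\ref{eq:expclusteringgen'}) by writing the double bracket as an antisymmetrized correlator and noting that the $P_0$-mediated pieces cancel against the counterterms in $[[\cdot,\cdot]]$.

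The main obstacle is balancing the approximation errors. The spin-chain reduction introduces polynomial prefactors $\abs{\cS_X}\abs{\cS_Y}$, while the non-spreading approximation requires $\ell < \dist(X,Y)$ and decays only as $\e^{-m\ell/16}$. Optimizing with $\ell \sim \dist(X,Y)^\alpha$ for the prescribed $\alpha\in(0,1)$ yields the stretched-exponential rate $\e^{-\tilde m \dist(X,Y)^\alpha}$; the logarithmic prefactor $\ln(\min\{\abs{\cS_X},\abs{\cS_Y}\})$ emerges from handling the $\alpha\to 1$ regime by a dyadic refinement over the sites in the smaller support. For the final optimality claim, the counterterms are essential: taking $X=\sigma^-_i$ and $Y=\sigma^-_j$ with $\abs{i-j}$ large, the vector $X\psi_0$ is a localized single-down-spin state whose projection onto $\Ran P_I$ is uniformly bounded below, so $\norm{\pa{\tau^K_t(X)P_0 Y}_K}$ stays bounded away from zero independently of $\dist(X,Y)$, which would contradict (\ref{eq:expclusteringgen'}) if the counterterm were omitted; an analogous argument applies to (\ref{eq:dynloc3333}).
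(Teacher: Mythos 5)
Your opening decomposition is correct ($R_K(\tau^K_t(X),Y) - (\tau^K_t(X)P_0Y)_K = P_K\tau^K_t(X)P_{(\Theta_2,\infty)}YP_K$), and your idea for the optimality claim (lower bound via the single-spin-flip projection onto $\Ran P_K$, realized through the one-particle Anderson model) is the right one, matching the mechanism in Lemma~\ref{lem:spillterms}. But the core of your plan has a genuine gap: you propose to control $P_K\tau^K_t(X)P_{(\Theta_2,\infty)}YP_K$ using only Property~DL, Theorem~\ref{corquasloc}, and Lemma~\ref{lem:Kitaev}, and none of these can reach the part of $\sigma(H)$ in $(\Theta_1,\infty)$. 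Property~DL and Theorem~\ref{corquasloc} give information only about $I=[\Theta_0,\Theta_1]$ and $I_0=[0,\Theta_1]$, while $P_{(\Theta_2,\infty)}$ overwhelmingly overlaps energies above $\Theta_1$ where no localization is assumed. Lemma~\ref{lem:Kitaev} decays only in the energy gap, not in $\dist(X,Y)$; by itself it provides no clustering. What bridges this is the Hastings-type filter argument (Lemmas~\ref{lemHast} and \ref{leminsertKf}): one writes $P_{(\Theta_2,\infty)}=g(H)+f(H)$ with a Gevrey cutoff $f\equiv 1$ on $[\Theta_1,\Theta_3]$, uses the Lieb--Robinson bound together with the stretched-exponential decay of $\hat f$ to control $Xf(H)Y$, and uses Lemma~\ref{lem:Kitaev} only to kill the residual piece $g(H)\bar P^{(\Theta_3)}$ after optimizing $\Theta_3\sim |\cS_X|\dist(X,Y)^\alpha$. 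Your proposal does not invoke any LR or Fourier machinery, and a resolvent identity between $P_{(\Theta_2,\infty)}$ and $\bar P_K$ is just the trivial rewriting $P_{(\Theta_2,\infty)}=\bar P_K - P_0$; it doesn't reduce the problem.

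Your account of where the second counterterm $(\tau^K_t(Y)P_0X)_K$ comes from is also not correct. Taking the adjoint of $P_K\tau^K_t(X)P_{(\Theta_2,\infty)}YP_K$ gives $P_KY^*P_{(\Theta_2,\infty)}\tau^K_t(X^*)P_K$, which is not what is needed. In the paper the counterterm arises from the Fourier argument itself: for $E,E'\in K$ one has $E+E'\in[2\Theta_0,2\Theta_2]\subset[\Theta_1,\Theta_3]$, so $f(E+E')=1$, and the matrix element $P_E Yf(E+E'-H)XP_{E'}$ picks up the ground-state projector via $f(E+E'-H)P_0=P_0$; after conjugating back by $e^{itH_K}$ this is exactly $(\tau^K_t(Y)P_0X)_K$, explaining the otherwise puzzling appearance of $\tau^K_t(Y)$ rather than $\tau^K_t(X)$. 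Your deduction of \eqref{eq:dynloc3333} from \eqref{eq:expclusteringgen'} via antisymmetrization plus \eqref{eq:dynloc} is fine and matches the paper. As written, though, the proposal would not close the argument because the high-energy block and the origin of the second counterterm are both unaddressed.
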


While  it is  obvious where the counterterms in \eq{eq:LRquasimix} come from, the same is not
true in \eq{eq:expclusteringgen'}, where the time evolution in the second term seems to sit in the \emph{wrong} place: it is $\tau^K_t\pa{Y}$ and not $\tau^K_t\pa{X}$. It turns out this term encodes  information about the states above the energy window $K$, and  the appearance of $\tau^K_t\pa{Y}$ is related to the reduction of this data to $P_0$, as can be seen in the proof.

\begin{remark}\label{remwonder}
One may  wonder why the counterterms  in \eq{eq:expclusteringgen'} do not appear in \eq{eq:expclustering3}.  The reason is that their traces obey decay estimates similar to  \eq{eq:expclustering3}  with $\alpha=1$, see Lemma~\ref{lemwonder}. 
 \end{remark}

\section{Preliminaries}\label{secprel}

\subsection{Decomposition of  local observables}

Given $S\subset[-L,L]\subset \Z$, $S\ne \emptyset$,  we define projections $P_{\pm}{\up{S}}$ by
\beq
P_+^{\pa{S}}= \bigotimes_{j\in S}\ \tfrac{1}{2} (1+\sigma_j^z)\qtx{and} P_-^{\pa{S}}=1-P_+^{\pa{S}}.
\eeq 
Note that
\beq\label{PsuppX}
 P_-^{(S)}\le \sum_{i\in S}  \cN_i .
 \eeq 
In particular,  
\beq\label{P-SP}
P_-^{\pa{S}}P_0 = P_0 P_-^{\pa{S}}=0 .
\eeq
We also set $S^c= [-L,L]\setminus S$, and note that
\beq\label{PPc}
P_+^{\pa{S}}P_+^{\pa{S^c}}=P_+^{\pa{S^c}}P_+^{\pa{S}}= P_+^{[-L,L]}= P_0.
\eeq

Given an observable $X$, we set $P_\pm ^{\pa{X}}= P_{\pm}^{(\cS_X)}$, obtaining the decomposition
\beq\label{Xdecomp}
 X =\sum_{a,b \in \set{+,-}}X^{a,b}, \qtx{where} X^{a,b}= P_{a} ^{\pa{X}} X P_{b} ^{\pa{X}}.
 \eeq
Moreover,  since $P_+^{\pa{X}}$ is a rank one projection on $\cH_{\cS_X}$, we must  have 
\beq\label{Xzeta}
X^{+,+}=\zeta_X P_+^{\pa{X}}, \qtx{where} \zeta_X\in \C, \ \abs{\zeta_X} \le \|X\|.
\eeq
 In particular, 
 \beq \label{X++0}
 \pa{X- \zeta_X}^{+,+}=0 \qtx{and} \norm{X- \zeta_X}\le 2 \norm{X}.
 \eeq

\subsection{Consequences of Property DL}
 Let  $H_\omega$  be a disordered  XXZ spin chain satisfying Property DL.
We write  $H=H_\omega^{(L)}$, and  generally omit   $\omega$ and $L$  from the notation. The following results hold uniformly on $L$.

\begin{lemma}\label{lem:cordyn} Let $X,Y$ be local observables.   Then
\begin{align}\label{P-gP-}
&\E \pa{\sup_{g\in G_{I_0}} \norm{P_{-} ^{\pa{X}} g(H)P_{-} ^{\pa{Y}}}_1}\le C \e^{-m\dist (X,Y)},
\\ \label{P-gP-2}
& \E\pa{\norm{P_{-}^{\pa{Y}} P_{-} ^{\pa{X}} P_{I_0} }_1 } \le C \e^{-\frac 1 2 m \dist (X,Y)}.
\end{align}
\end{lemma}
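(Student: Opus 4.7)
The plan is to prove both bounds by orthogonally decomposing the ``at least one down-spin'' projections. Define $Q_j^{(X)}$ as the projection onto configurations whose leftmost down-spin in $\cS_X$ sits at site $j$; the $Q_j^{(X)}$ are mutually orthogonal projections with $P_-^{(X)} = \sum_{j\in\cS_X} Q_j^{(X)}$, satisfying $Q_j^{(X)} \le \mathcal{N}_j$ and $Q_j^{(X)} = \mathcal{N}_j Q_j^{(X)} = Q_j^{(X)} \mathcal{N}_j$ (as functions of the commuting projections $\mathcal{N}_k$, $k \in \cS_X$). Define $Q_i^{(Y)}$ analogously.

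For the first inequality, expand $P_-^{(X)} g(H) P_-^{(Y)} = \sum_{i,j} Q_j^{(X)} g(H) Q_i^{(Y)}$ and use $Q_j^{(X)} g(H) Q_i^{(Y)} = Q_j^{(X)} \mathcal{N}_j g(H) \mathcal{N}_i Q_i^{(Y)}$ together with $\|Q_j^{(X)}\|, \|Q_i^{(Y)}\| \le 1$ to bound each summand's trace norm by $\|\mathcal{N}_j g(H) \mathcal{N}_i\|_1$. The triangle inequality yields $\norm{P_-^{(X)} g(H) P_-^{(Y)}}_1 \le \sum_{i\in\cS_Y, j\in\cS_X} \norm{\mathcal{N}_j g(H) \mathcal{N}_i}_1$. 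Taking $\sup_{g\in G_{I_0}}$ (equivalent to $\sup_{g\in G_I}$ since $\mathcal{N}_i P_0 = 0$), then expectation, and applying Property DL, the one-dimensional geometric sum $\sum_{i,j} e^{-m|i-j|} \le C e^{-m\dist(X,Y)}$ finishes the argument.

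For the second inequality, the $\frac{1}{2}m$ exponent points to a Jensen step converting a Hilbert--Schmidt bound at full rate into a trace-norm bound at half rate. The key identity is $\norm{\mathcal{N}_i \mathcal{N}_j P_{I_0}}_2^2 = \tr(\mathcal{N}_i P_{I_0} \mathcal{N}_j) \le \norm{\mathcal{N}_i P_{I_0} \mathcal{N}_j}_1$, obtained from cyclicity of the trace, $\mathcal{N}_i^2 = \mathcal{N}_i$, and the fact that $\tr(\mathcal{N}_i P_{I_0} \mathcal{N}_j)$ is nonnegative in a spin-configuration basis (so $|\tr A| \le \norm{A}_1$ applies). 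Property DL bounds the expectation of the right-hand side by $C e^{-m|i-j|}$, and $\E\sqrt{X} \le \sqrt{\E X}$ then gives $\E \norm{\mathcal{N}_i \mathcal{N}_j P_{I_0}}_2 \le C' e^{-m|i-j|/2}$.

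To conclude, first reduce to $P_-^{(Y)} P_-^{(X)} P_I$ using $P_-^{(X)} P_0 = 0$, then orthogonally decompose to obtain $\norm{P_-^{(Y)} P_-^{(X)} P_{I_0}}_1 \le \sum_{i,j} \norm{Q_i^{(Y)} Q_j^{(X)} P_{I_0}}_1$, and apply $Q_i^{(Y)} Q_j^{(X)} = \mathcal{N}_i \mathcal{N}_j Q_i^{(Y)} Q_j^{(X)}$ to bring the Hilbert--Schmidt quantity $\norm{\mathcal{N}_i \mathcal{N}_j P_{I_0}}_2$ into each summand at the price of operator-norm factors $\le 1$. Summing over $(i,j) \in \cS_Y \times \cS_X$ yields the geometric series $\sum_{i,j} e^{-m|i-j|/2} \le C e^{-m\dist(X,Y)/2}$. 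The main obstacle is executing this HS-to-trace-norm conversion per summand uniformly in $L$: a naive $\norm{\cdot}_1 \le \sqrt{\rank}\,\norm{\cdot}_2$ imports a $\sqrt{\dim P_{I_0}}$ that grows with $L$, so the orthogonality of ranges from the $Q_i^{(Y)} Q_j^{(X)}$ pieces must be exploited rather than discarded.
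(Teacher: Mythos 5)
Your treatment of \eqref{P-gP-} is correct and essentially the same as the paper's: where you decompose $P_-^{(X)} = \sum_{j\in\cS_X} Q_j^{(X)}$ into orthogonal pieces with $Q_j^{(X)} = \mathcal{N}_j Q_j^{(X)}$, the paper instead factors $P_-^{(S)} = \bigl(\sum_{i\in S}\mathcal{N}_i\bigr) Z$ with a commuting contraction $Z = \bigl(\sum_{i\in S}\mathcal{N}_i\bigr)^{-1} P_-^{(S)}$; both routes land on $\norm{P_-^{(X)}g(H)P_-^{(Y)}}_1 \le \sum_{i\in\cS_Y, j\in\cS_X}\norm{\mathcal{N}_j g(H)\mathcal{N}_i}_1$, and then Property DL plus a geometric sum finishes.

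For \eqref{P-gP-2}, however, your proposal has the genuine gap you yourself flag at the end. The chain you set up bounds $\norm{Q_i^{(Y)}Q_j^{(X)}P_{I_0}}_1$ from above, but the Hilbert--Schmidt quantity $\norm{\mathcal{N}_i\mathcal{N}_j P_{I_0}}_2$ you manufacture via the trace identity sits on the \emph{wrong side}: $\norm{\cdot}_1 \ge \norm{\cdot}_2$, so it cannot be inserted as an upper bound on a trace norm without paying a $\sqrt{\operatorname{rank}}$ factor that grows with $L$. Orthogonality of the ranges of $Q_i^{(Y)}Q_j^{(X)}$ does not rescue this, since Pythagoras controls the HS norm, not the trace norm, of the sum. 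So the HS-plus-Jensen route, while it is indeed a natural first guess for a $\tfrac12 m$ rate, does not close.

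The paper's argument for \eqref{P-gP-2} is simpler and avoids Hilbert--Schmidt norms entirely, and the $\tfrac12$ has a purely geometric origin rather than a Jensen origin. One first reduces to $P_I$ as you do, and then uses the same factorization idea as in \eqref{P-gP-} applied to $P_-^{([-L,L])}$: since $P_-^{(Y)}P_-^{(X)}P_I\,P_0 = 0$, one has
$\norm{P_-^{(Y)}P_-^{(X)}P_I}_1 \le \sum_{k=-L}^{L}\norm{P_-^{(Y)}P_-^{(X)}P_I\mathcal{N}_k}_1$.
Because $P_-^{(X)}$ and $P_-^{(Y)}$ commute, each summand is bounded by the smaller of $\norm{P_-^{(X)}P_I\mathcal{N}_k}_1$ and $\norm{P_-^{(Y)}P_I\mathcal{N}_k}_1$, so \eqref{P-gP-} gives after expectation a bound $C\,e^{-m\max\{\dist(k,\cS_X),\,\dist(k,\cS_Y)\}}$. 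Summing over $k$ and splitting at the midpoint between $\cS_X$ and $\cS_Y$ (which lies at distance $\ge \tfrac12\dist(X,Y)$ from whichever support is farther) produces the $e^{-\frac12 m\,\dist(X,Y)}$ decay. You already have all the ingredients -- the factorization trick and the first bound -- so replacing your HS detour with this one-line summation over $k$ would complete the proof.
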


\begin{proof}  It  follows from \eq{PsuppX} that setting
$Z=   \pa{ \sum_{i\in S}  \cN_i }^{-1}P_{-} ^{\pa{S}}$, we have $\norm{Z}\le 1$ and 
$P_{-} ^{\pa{S}}=\pa{ \sum_{i\in S}  \cN_i } Z= Z \pa{ \sum_{i\in S}  \cN_i }$, and hence we have
\beq
\norm{P_{-} ^{\pa{X}} g(H)P_{-} ^{\pa{Y}}}_1\le \sum_{i\in \cS_X,\,  j\in \cS_Y} \norm{\cN_i g(H)\cN_j}_1.
\eeq
The estimate \eq{P-gP-} then follows immediately from from \eq{eq:efcorDL}  using \cite[Eq.~(3.25)]{EKS}

Similarly,
\begin{align}
\norm{P_{-} ^{\pa{Y}} P_{-} ^{\pa{X}} P_{I_0}}_1 = \norm{P_{-} ^{\pa{Y}} P_{-} ^{\pa{X}} P_{I}}_1\le \sum_{k=-L}^L \norm{P_{-} ^{\pa{Y}} P_{-} ^{\pa{X}} P_I\cN_k }_1.
\end{align}
Since $[P_{-} ^{\pa{Y}} ,P_{-} ^{\pa{X}}]=0$,
\begin{align}
\norm{P_{-} ^{\pa{Y}} P_{-} ^{\pa{X}} P_I\cN_k }_1\le \min \set{\norm{ P_{-} ^{\pa{X}} P_I\cN_k }_1,\norm{P_{-} ^{\pa{Y}} P_I\cN_k }_1},
\end{align}
so it follows from \eq{P-gP-} that
\beq
\E\pa{\norm{P_{-} ^{\pa{Y}} P_{-} ^{\pa{X}} P_I\cN_k }_1}\le C \e^{-m \max \set{ \dist\pa{k, \cS_X}, \dist\pa{k, \cS_Y}}}.
\eeq
Suppose, say, $\max \cS_X <\min \cS_Y $, and    let $K= \frac 12 \pa{\max \cS_X +  \min \cS_Y}$. Then, 
\begin{align}\nn
\E\pa{\norm{P_{-} ^{\pa{Y}} P_{-} ^{\pa{X}} P_I}_1}&\le \sum_{k\le K}  \e^{-m\dist\pa{k, \cS_Y}} +
\sum_{k\ge  K}  \e^{-m\dist\pa{k, \cS_X}}\\
&  \le C \e^{-\frac 1 2 m \dist (X,Y)},
\end{align}
where the last calculation is done as in \cite[Eq. (3.25)]{EKS}, yielding \eq{P-gP-2}. 
\end{proof}

\begin{lemma}\label{lem:cordyn444}Let $X$ and $Y$ be  local observables and $\ell \ge 1$.

 \begin{enumerate}
 
 \item   We  have 
\begin{align}\label{P-gP-3}
\E \pa{\sup_{I\in G_I}\norm{P_{-} ^{\pa{X}}g(H) P_{+} ^{\pa{\cS_{X,\ell}}}}_1} \le C \e^{-m \ell}.
\end{align}

\item  If   $ \ell  \le \frac 1 2\dist (X,Y)$,
we have
\begin{align}\label{P-gP-4}
\E \pa{\sup_{g\in G_I}\norm{P_{+} ^{\pa{\cS_{Y,\ell}^c}}g(H)P_{+} ^{\pa{\cS_{X,\ell}^c}}}_1}\le  C \e^{-m \pa{\dist \pa{X,Y}-2\ell}}.\end{align}

\end{enumerate}

\end{lemma}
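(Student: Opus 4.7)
The plan is to reduce both bounds to Lemma~\ref{lem:cordyn} via a single algebraic trick that converts one-sided $P_+$-projections into one-sided $P_-$-projections on the complementary region, combined with the observation that $g(H)$ annihilates the vacuum for every $g\in G_I$. Since $\Theta_0>0$ we have $0\notin I$, so $g(0)=0$ and hence $g(H)P_0=P_0 g(H)=0$; because the unique ground state is the all-up state, $P_0=P_+^{[-L,L]}$, and this rewrites as
\[
g(H) = P_-^{[-L,L]}g(H)P_-^{[-L,L]} \qtx{for all} g\in G_I.
\]

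The algebraic identity I will use is that, for any $S\subset[-L,L]$, the projections $P_+^{(S)}$ and $P_+^{(S^c)}$ commute and multiply to $P_+^{[-L,L]}$, so
\[
P_-^{[-L,L]}P_+^{(S)} = \pa{1-P_+^{(S)}P_+^{(S^c)}}P_+^{(S)} = P_+^{(S)}P_-^{(S^c)},
\]
and, by taking adjoints, $P_+^{(S)}P_-^{[-L,L]}=P_-^{(S^c)}P_+^{(S)}$. In words, once projected onto $\bar P_0$, an ``all-up on $S$'' projection is equivalent to a ``some-spin-down in $S^c$'' projection times the same ``all-up on $S$'' projection.

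For (i), applying this identity with $S=\cS_{X,\ell}$ yields
\[
P_-^{(X)}g(H)P_+^{(\cS_{X,\ell})} = P_-^{(X)}g(H)P_-^{(\cS_{X,\ell}^c)}P_+^{(\cS_{X,\ell})}.
\]
Dropping the rightmost norm-one projection in trace norm and using $G_I\subset G_{I_0}$, the bound \eqref{P-gP-} of Lemma~\ref{lem:cordyn}, applied with the second ``support'' taken to be $\cS_{X,\ell}^c$, produces $C\e^{-m\,\dist(\cS_X,\cS_{X,\ell}^c)}\le C\e^{-m\ell}$.

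For (ii), I apply the identity on both sides to obtain
\[
P_+^{(\cS_{Y,\ell}^c)}g(H)P_+^{(\cS_{X,\ell}^c)} = P_+^{(\cS_{Y,\ell}^c)}P_-^{(\cS_{Y,\ell})}g(H)P_-^{(\cS_{X,\ell})}P_+^{(\cS_{X,\ell}^c)},
\]
and drop the two outer (norm-one) projections in trace norm before invoking \eqref{P-gP-} once more. The hypothesis $\ell\le\tfrac{1}{2}\dist(X,Y)$ keeps $\cS_{X,\ell}$ and $\cS_{Y,\ell}$ disjoint with $\dist(\cS_{X,\ell},\cS_{Y,\ell})\ge\dist(X,Y)-2\ell$, which yields the claimed bound. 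I do not anticipate any real obstacle here: the only ingredients are the vanishing of $g(H)$ on the vacuum and the complementary-projection identity above, after which both estimates are direct consequences of Lemma~\ref{lem:cordyn}.
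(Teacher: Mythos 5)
Your proposal is correct and follows essentially the same route as the paper: both arguments reduce \eqref{P-gP-3} and \eqref{P-gP-4} to Lemma~\ref{lem:cordyn} by converting the one-sided $P_+$ projections into $P_-$ projections on the complementary regions, arriving at exactly the paper's intermediate expressions $P_{-}^{(X)}g(H)P_{-}^{(\cS_{X,\ell}^c)}P_{+}^{(\cS_{X,\ell})}$ and $P_{+}^{(\cS_{Y,\ell}^c)}P_{-}^{(\cS_{Y,\ell})}g(H)P_{-}^{(\cS_{X,\ell})}P_{+}^{(\cS_{X,\ell}^c)}$ before dropping the outer norm-one factors. The only cosmetic difference is that you first rewrite $g(H)=P_-^{[-L,L]}g(H)P_-^{[-L,L]}$ and then use $P_-^{[-L,L]}P_+^{(S)}=P_-^{(S^c)}P_+^{(S)}$, whereas the paper inserts $1=P_-^{(S^c)}+P_+^{(S^c)}$ and kills the $P_+^{(S^c)}$ term via $P_+^{(S)}P_+^{(S^c)}=P_0$ and $g(H)P_0=0$; these are the same observation.
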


\begin{proof}
 Let  $\ell \ge 1$ and   $g \in G_I$.  If $\cS_{X,\ell}^c= \emptyset$, \eq{P-gP-3} is obvious since $P_{+} ^{\pa{\cS_{X,\ell}}}=P_0$.  If $\cS_{X,\ell}^c\not= \emptyset$, using \eq{PPc} we get
  \begin{align}\label{PPc1}
&\norm{P_{-} ^{\pa{X}}g(H) P_{+} ^{\cS_{X,\ell}}}_1 =\norm{P_{-} ^{\pa{X}}g(H) P_{-} ^{\cS_{X,\ell}^c} P_{+} ^{\cS_{X,\ell}}}_1 \le
\norm{P_{-} ^{\pa{X}}g(H) P_{-} ^{\cS_{X,\ell}^c} }_1,
\end{align}
and   \eq{P-gP-3} follows from \eq{PPc1} and  \eq{P-gP-}.

Similarly, using \eq{PPc} twice, we get
\begin{align}\notag
\norm{P_{+} ^{\pa{\cS_{Y,\ell}^c}}g(H)P_{+} ^{\pa{\cS_{X,\ell}^c}}}_1 &  = \norm{P_{+} ^{\pa{\cS_{Y,\ell}^c}}P_{-} ^{\pa{\cS_{Y,\ell}}}g(H) 
P_{-} ^{\pa{\cS_{X,\ell}}}P_{+} ^{\pa{\cS_{X,\ell}^c}}}_1\\
& \le \norm{P_{-} ^{\pa{\cS_{Y,\ell}}}g(H) 
P_{-} ^{\pa{\cS_{X,\ell}}}}_1. \label{P-gP-4222}
\end{align}
If  $ \ell   \le\frac 1 2 \dist (X,Y)$, then $\dist (\cS_{X,\ell},\cS_{Y,\ell})\ge  \dist (X,Y) - 2\ell $.
  In this case \eq{P-gP-4} follows from  \eq{P-gP-4222} and \eq{P-gP-}.
\end{proof}

\begin{lemma} \label{lemPXtgY}  Let $X,Y$ be local observables with $X^{+,+}=Y^{+,+}=0$.
Then
\beq\label{PXtgY}
\E\pa{\sup_{t\in \R}\sup_{g\in G_I}\norm{\pa{\tau_t\pa{ X}g(H)Y}_I}_1}\le  C\norm{X}\norm{Y}  \e^{-\frac 1 8 m\dist (X,Y)} .
\eeq
\end{lemma}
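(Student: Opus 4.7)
The plan is to reduce to $t=0$ via unitary invariance, decompose $X$ and $Y$ using the hypotheses $X^{+,+}=Y^{+,+}=0$, and bound each of the four resulting terms by cyclicity of the trace norm together with Lemma~\ref{lem:cordyn}. Since $P_I$ commutes with $e^{\pm itH}$, one has $\norm{\pa{\tau_t\pa{X}g(H)Y}_I}_1 = \norm{P_I X g_t(H) Y P_I}_1$, where $g_t(\lambda):=e^{-it\lambda}g(\lambda)$ satisfies $|g_t| = |g| \le \chi_I$ and hence $g_t \in G_I$; so it suffices to bound $\E\sup_{g\in G_I}\norm{P_IXg(H)YP_I}_1$. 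We may assume $\dist(X,Y) > 0$, so that $\cS_X\cap\cS_Y=\emptyset$ and $P_\pm^{(X)}$ commutes with $Y$ and with $P_\pm^{(Y)}$.

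Using $X = P_-^{(X)}X + P_+^{(X)}XP_-^{(X)}$ and $Y = YP_-^{(Y)}+P_-^{(Y)}YP_+^{(Y)}$, expansion gives $P_IXg(H)YP_I = T_1+T_2+T_3+T_4$, where, writing $P_\pm := P_\pm^{(X)}$ and $Q_\pm := P_\pm^{(Y)}$,
\begin{align*}
T_1 &= P_I P_- X g(H) Y Q_- P_I, & T_2 &= P_I P_- X g(H) Q_- Y Q_+ P_I, \\
T_3 &= P_I P_+ X P_- g(H) Y Q_- P_I, & T_4 &= P_I P_+ X P_- g(H) Q_- Y Q_+ P_I.
\end{align*}
For $T_4$, both $P_-$ and $Q_-$ are adjacent to $g(H)$, so $\norm{T_4}_1 \le \norm{X}\norm{Y}\norm{P_- g(H) Q_-}_1 \le C\norm{X}\norm{Y}\e^{-m\dist(X,Y)}$ by \eqref{P-gP-}. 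For $T_3$, factor out $X, Y$ in operator norm and apply cyclicity: $\norm{P_- g(H) Y Q_-}_1 = \norm{Y Q_- P_- g(H)}_1 \le \norm{Y}\norm{Q_- P_- g(H)}_1 = \norm{Y}\norm{P_- g(H) Q_-}_1$, and \eqref{P-gP-} again yields the same rate.

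For the mixed term $T_2$, cyclicity gives $\norm{T_2}_1 \le \norm{X}\norm{Q_- Y Q_+ P_I P_-}_1$; another cyclic rearrangement yields $\norm{Q_- Y Q_+ P_I P_-}_1 = \norm{P_I P_- Q_- Y Q_+}_1 \le \norm{Y}\norm{P_I P_- Q_-}_1$. Taking adjoints and using $P_- P_0 = 0$, $\norm{P_I P_- Q_-}_1 = \norm{Q_- P_- P_{I_0}}_1$, controlled by \eqref{P-gP-2} as $\le C \e^{-\tfrac12 m\dist(X,Y)}$. The same sequence of moves handles $T_1$. Summing, $\E \sup_{g\in G_I}\norm{P_IXg(H)YP_I}_1 \le C\norm{X}\norm{Y}\e^{-\tfrac12 m\dist(X,Y)}$, which comfortably implies the stated $\e^{-\tfrac18 m\dist(X,Y)}$; the lemma's constant $\tfrac18$ appears to be chosen for uniformity with companion estimates elsewhere in the paper. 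The main technical point is the cyclic reshuffling in the mixed terms $T_1, T_2$ that exposes the trace-class factor $P_I P_- Q_-$ amenable to Lemma~\ref{lem:cordyn}(ii).
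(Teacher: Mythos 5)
Your argument breaks down at the mixed terms $T_1$, $T_2$, $T_3$, where you invoke ``cyclicity of the trace norm.''  The trace norm is \emph{not} cyclic: in general $\norm{AB}_1\neq\norm{BA}_1$ (cyclicity is a property of the trace functional $\tr$, not of $\norm{\cdot}_1$).  This is not repairable within your decomposition.  In $T_2 = P_I P_-^{(X)} X g(H) P_-^{(Y)} Y P_+^{(Y)} P_I$ the two down--projections $P_-^{(X)}$ and $P_-^{(Y)}$, which you need to bring together to feed into \eqref{P-gP-2}, are separated by $Xg(H)P_-^{(Y)}Y$, and neither $P_-^{(X)}$ commutes with $X$ nor $P_-^{(Y)}$ with $Y$, since these projections live precisely on the supports of $X$ and $Y$.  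The best one can legitimately extract from $T_2$ via H\"older is $\norm{T_2}_1\le\norm{X}\norm{Y}\,\norm{g(H)P_-^{(Y)}}_1\le\norm{X}\norm{Y}\,\norm{P_I P_-^{(Y)}}_1$, which by \eqref{P-gP-2} (applied with $X=Y$) is $O(1)$ in expectation but carries \emph{no} decay in $\dist(X,Y)$.  The same structural obstruction affects $T_1$ and $T_3$; only $T_4$, where both down--projections sit directly adjacent to $g(H)$, is handled correctly.

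The paper's proof avoids exactly this trap by inserting the resolutions of identity at the buffer sets $\cS_1=\cS_{X,\ell/2}^{c}$ and $\cS_2=\cS_{Y,\ell/2}^{c}$ (with $2\ell=\dist(X,Y)$), that is \emph{between} $X$ and $g(H)$ and between $g(H)$ and $Y$, rather than decomposing $X$ and $Y$ themselves.  The gain is twofold: $P_\pm^{(\cS_1)}$ commutes with $X$ (and $P_\pm^{(\cS_2)}$ with $Y$), so the projection can be slid through $X$ to sit against $P_I$, at which point the hypothesis $X^{+,+}=0$ (giving $X=P_-^{(X)}X+P_+^{(X)}XP_-^{(X)}$) produces the factor $\norm{P_I P_-^{(\cS_1)}P_-^{(X)}}_1$, decaying via \eqref{P-gP-2} at rate $\e^{-\frac{m}{4}\ell}=\e^{-\frac{m}{8}\dist(X,Y)}$; and the remaining $(+,+)$ term $\norm{P_+^{(\cS_1)}g(H)P_+^{(\cS_2)}}_1$ is controlled by the separate estimate \eqref{P-gP-4} of Lemma~\ref{lem:cordyn444}, an ingredient your argument never touches.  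Note that the exponent $\frac18$ arises precisely from this two--step dilution $\dist(X,Y)\mapsto\ell\mapsto\ell/2$ and is not merely a cosmetic convention as you suggest.
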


\begin{proof} Since
 \beq
 \norm{\pa{\tau_t\pa{ X}g(H)Y}_I}_1= \norm{\pa{{ X}\e^{-itH}g(H)Y}_I}_1 ,
 \eeq
it suffices to prove
\beq\label{PXtgY99}
\E\pa{\sup_{g\in G_I}\norm{\pa{ X g(H)Y}_I}_1}\le  C\norm{X}\norm{Y}  \e^{-\frac 1 8 m\dist (X,Y)} .
\eeq

 Let $X,Y$ be local observables with $X^{+,+}=Y^{+,+}=0$, and let  $0<2 \ell=    \dist (X,Y) $. Set $\cS_1=\cS_{X,\frac \ell 2}^c$, $\cS_2=\cS_{Y,\frac \ell 2}^c$.   Given  $g \in G_I$, and  inserting  $1=P_{-} ^{\pa{\cS_j}}+P_{+} ^{\pa{\cS_j}}$, $j=1,2$, we get
\beq
 X g(H)Y=\sum_{a=\pm;b=\pm} { XP_{a} ^{\pa{\cS_1}}}g(H)P_{b} ^{\pa{\cS_2}}Y.
\eeq
We estimate the norms of the terms on the right hand side separately. If one of the indices $a,b$, say $a=-$, we get  
\begin{align}\nn
&\norm{\pa{XP_{-} ^{\pa{\cS_1}}g(H)P_{b} ^{\pa{\cS_2}}Y}_I}_1 \le  \|Y\|\norm{P_I XP_{-} ^{\pa{\cS_1}}\e^{-itH}g(H)}_1 \\ & \nn \quad \le  \|Y\|\norm{P_I XP_{-} ^{\pa{\cS_1}}P_I}_1
= \|Y\|\norm{P_I P_{-} ^{\pa{\cS_1}} XP_{-} ^{\pa{\cS_1}}P_I}_1\\ &  \quad \le \norm{X} \|Y\| \pa{\norm{P_IP_{-} ^{\pa{\cS_1}}P_{-} ^{\pa{X}}}_1 +\norm{P_{-} ^{\pa{X}} P_{-} ^{\pa{\cS_1}}P_I}_1},\label{EPX-+-Y49}
\end{align}
where we have used the fact that $[P_{-} ^{\pa{\cS_1}},X]=0$,  $X^{+,+}=0$, and $g\in G_I$. If  $a=b=+$,  we bound the corresponding contribution as 
\beq
\norm{ \pa{ XP_{+} ^{\pa{\cS_1}} g(H)P_{+} ^{\pa{\cS_2}}Y}_I}_1\le \|X\|\|Y\|\norm{P_{+} ^{\pa{\cS_1}}g(H)P_{+}^{\pa{\cS_2}}}_1.
\eeq
Using  \eqref{P-gP-2} and \eqref{P-gP-4} we get 
\begin{align}\nn
E\pa{\sup_{g\in G_I} \norm{\pa{{ X}g(H)Y}_I}_1} &\le C\norm{X}\norm{Y}\pa{2 \e^{- \frac m 4 \ell} + \e^{- m \pa{\dist \pa{X,Y}-\ell }}}\\ & \le   C\norm{X}\norm{Y}\e^{-\frac 1 8 m\dist \pa{X,Y}} . \label{EPX-+-Y}
\end{align}
 \end{proof}

 The following lemma justifies Remark~\ref{remwonder}.
 
 \begin{lemma}\label{lemwonder}  Let $X,Y$ be local observables. Then for all  intervals $K \subset I$
 we have
 \begin{align}\label{decaycounter}
 E \pa{\sup_{t\in \R}\abs{\tr\pa{\tau^K_t(X)P_0 Y}_K }}\le C \|X\| \|Y\| \e^{-m\dist (X,Y)}.
 \end{align}
 \end{lemma}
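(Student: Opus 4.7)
My plan is to reduce $\tr\pa{\tau^K_t(X) P_0 Y}_K$ to a trace involving $P_-^{(Y)} P_K \e^{itH} P_-^{(X)}$ and then apply Lemma~\ref{lem:cordyn}\eqref{P-gP-}.

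First, since $P_K$ commutes with $H$, the spectral theorem gives $\e^{\pm itH_K}=\bar P_K + P_K\e^{\pm itH}$. Combining this with $\e^{\pm itH}\psi_0=\psi_0$ and $P_K P_0=0$ (because $0\notin K\subset I$), one obtains $P_K\tau^K_t(X) P_0 = P_K\e^{itH} X P_0$, so cyclicity of the trace yields
\[
\tr\pa{\tau^K_t(X) P_0 Y}_K = \tr\pa{P_K\,\e^{itH}\,X\,P_0\,Y}.
\]

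Second, I insert $P_-^{(X)}$ and $P_-^{(Y)}$ at no cost. Using \eqref{Xdecomp}--\eqref{Xzeta} together with $P_+^{(X)}P_0=P_0$ (a consequence of \eqref{PPc}), a direct computation gives $X P_0 = \zeta_X P_0 + P_-^{(X)} X P_0$, and the analogous identity for $Y^*$ shows that $P_0 Y$ differs from $P_0 Y P_-^{(Y)}$ by a scalar multiple of $P_0$. Both scalar contributions vanish under $\tr(P_K\e^{itH}\,\cdot\,)$, since $\e^{itH}P_0=P_0$ forces a factor $P_K P_0=0$, so
\[
\tr\pa{\tau^K_t(X) P_0 Y}_K = \tr\pa{P_-^{(Y)}\,P_K\,\e^{itH}\,P_-^{(X)}\cdot X P_0 Y}.
\]
As $P_0$ is rank one, $X P_0 Y$ is rank one with operator norm at most $\|X\|\|Y\|$, and the bound $|\tr(AB)|\le\|A\|_1\|B\|$ gives
\[
\bigl|\tr\pa{\tau^K_t(X) P_0 Y}_K\bigr| \le \|X\|\,\|Y\|\,\norm{P_-^{(Y)} P_K \e^{itH} P_-^{(X)}}_1.
\]

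Finally, I write $P_K\e^{itH} = g_t(H)$ with $g_t(\lambda):=\chi_K(\lambda)\e^{it\lambda}$; then $|g_t|\le\chi_K\le\chi_{I_0}$ for every $t\in\R$, so $\{g_t\}_{t\in\R}\subset G_{I_0}$. Hence
\[
\sup_{t\in\R}\norm{P_-^{(Y)}P_K \e^{itH}P_-^{(X)}}_1 \le \sup_{g\in G_{I_0}}\norm{P_-^{(Y)}g(H)P_-^{(X)}}_1,
\]
and Lemma~\ref{lem:cordyn}\eqref{P-gP-} delivers the claimed $C\e^{-m\dist(X,Y)}$ bound after taking expectation. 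I do not foresee any substantive obstacle; the argument essentially verifies that the truncated propagator $P_K\e^{itH}$ still sits in the class $G_{I_0}$ to which Property DL applies directly.
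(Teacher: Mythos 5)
Your proof is correct and follows essentially the same route as the paper: reduce to $\tr\bigl(P_{-}^{(Y)} P_K \e^{itH} P_{-}^{(X)} X P_0 Y\bigr)$ using \eqref{Xzeta}, \eqref{P-SP}/\eqref{PPc}, and $P_K P_0=0$, then bound by $\|X\|\|Y\|\,\|P_{-}^{(Y)} P_K \e^{itH} P_{-}^{(X)}\|_1$ and apply \eqref{P-gP-} with $g_t(\lambda)=\chi_K(\lambda)\e^{it\lambda}\in G_{I_0}$. No gaps; the argument matches the paper's.
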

 
 \begin{proof}
 Given $K\subset I$, we have
 \begin{align}
\tr\pa{\tau^K_t(X)P_0 Y}_K & = \tr P_K\tau_t(X)P_0 YP_K = \tr P_0 YP_K  \tau_t(X)P_0\\ \nn & = \tr P_0 Y P_{-} ^{\pa{Y}}P_K  \e^{itH} P_{-} ^{\pa{X}}XP_0,
\end{align}
where we used \eq{Xzeta}, \eq{P-SP}, and $P_K P_0=0$.  It follows that
\begin{align}
\abs{\tr\pa{\tau^K_t(X)P_0 Y}_K}\le \|X\| \|Y\|  \norm{ P_{-} ^{\pa{Y}}P_K  \e^{itH} P_{-} ^{\pa{X}}}_1.
\end{align}
The estimate \eq{decaycounter} now follows from  \eq{P-gP-}.
 \end{proof}

\subsection{Estimates with Fourier transforms}
  
Let  $H_\omega$  be a disordered  XXZ spin chain.
Given a function  $f\in C^\infty_c(\R)$, we write its Fourier transform as
 \beq\label{FTf}
\hat f(t)=\tfrac 1 {2\pi} \int_\R \e^{itx}  f(x) \, \d x,\qtx{and recall } f(x)= \int_\R \e^{-itx} \hat f(t) \, \d t .
\eeq

The following lemma is an adaptation of an argument of Hastings \cite{Hast,HK}, which combines the Lieb-Robinson bound  with estimates on  Fourier transforms.

\begin{lemma} \label{lemHast} Let $\alpha \in (0,1)$, and consider a function $f\in C^\infty_c(\R)$  such that
\beq
\abs{\hat f(t)} \le C_f \e^{-m_f\abs{t}^\alpha} \qtx{for all} \abs{t}\ge 1,
\eeq
where $C_f$ and $m_f >0$ are constants.
Then for all local observables   $X$ and $Y$ we have
\begin{align}\label{Htrickest}
&\norm{Xf(H)Y -  \int_\R   \e^{-irH} Y \tau_r\pa{ X} \hat f(r) \, \d r }\\ \nn & \qquad  \qquad  \qquad
\le  C_1  \norm{X} \norm{Y}\pa{1+  \norm{\hat f}_1} \e^{- m_1\pa{ \dist (X,Y)}^\alpha} ,
\end{align}
where $C_1$ and $m_1>0$ are suitable constants (depending on $C_f$, $m_f$, and $\alpha$),
uniformly in $L$.
\end{lemma}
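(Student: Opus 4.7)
The plan is to follow the Fourier representation trick of Hastings, trading time-evolution for the Lieb-Robinson decay, and then balancing the LR bound against the fast Fourier-side decay of $\hat f$. Using Fourier inversion $f(H)=\int_{\R}\hat f(r)\e^{-irH}\,\d r$ and the identity $X\e^{-irH}=\e^{-irH}\tau_r(X)$, I first rewrite
\begin{align*}
Xf(H)Y-\int_{\R}\e^{-irH}Y\tau_r(X)\hat f(r)\,\d r
&=\int_{\R}\e^{-irH}\bigl[\tau_r(X),Y\bigr]\hat f(r)\,\d r,
\end{align*}
so that the problem reduces to bounding $\int_\R|\hat f(r)|\,\|[\tau_r(X),Y]\|\,\d r$.

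Second, I would invoke a Lieb-Robinson bound for $H^{(L)}$. Since the local terms $h_{i,i+1}$ and $\lambda\omega_i\cN_i$ are uniformly bounded (the bound on $\omega_i$ being built into Property DL) and the interaction is nearest-neighbor on $\Z$, the standard 1D Lieb-Robinson estimate yields constants $v,\mu,C_{LR}>0$, independent of $L$ and $\omega$, such that for all local observables $X,Y$ and all $r\in\R$,
\begin{equation*}
\norm{[\tau_r(X),Y]}\le C_{LR}\norm{X}\norm{Y}\,\e^{-\mu(\dist(X,Y)-v|r|)}.
\end{equation*}
I would then split the $r$-integral at a threshold $T>0$. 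For $|r|\le T$ the LR bound together with $\int|\hat f|\le\norm{\hat f}_1$ contributes at most $C_{LR}\norm{X}\norm{Y}\norm{\hat f}_1\e^{-\mu(d-vT)}$, where $d=\dist(X,Y)$. For $|r|\ge T$ I would use the trivial bound $\norm{[\tau_r(X),Y]}\le 2\norm{X}\norm{Y}$ together with the hypothesis $|\hat f(r)|\le C_f\e^{-m_f|r|^\alpha}$, yielding a contribution bounded by $C\norm{X}\norm{Y}\,\e^{-(m_f/2)T^\alpha}$ for $T\ge 1$ (the remaining integral being absorbed into the constant).

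Third, I would optimize: pick $T=cd$ with $0<c<\mu/v$. Then the inner piece becomes $C_{LR}\norm{X}\norm{Y}\norm{\hat f}_1\,\e^{-\mu(1-vc)d}$, which for $d\ge 1$ and $\alpha\in(0,1)$ is dominated by $\e^{-\mu(1-vc)d^\alpha}$, while the outer piece becomes $C\norm{X}\norm{Y}\e^{-(m_f/2)c^\alpha d^\alpha}$. Setting $m_1:=\min\{\mu(1-vc),(m_f/2)c^\alpha\}$ yields \eqref{Htrickest} for large $d$; for bounded $d$ (and for $T<1$), the bound is trivial from $\norm{X f(H)Y}\le\norm{X}\norm{Y}\norm{f}_\infty\le\norm{X}\norm{Y}\norm{\hat f}_1$ and the corresponding direct bound on the integral, after enlarging $C_1$. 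The factor $(1+\norm{\hat f}_1)$ in \eqref{Htrickest} accommodates the $\norm{\hat f}_1$ from the inner region and the $C_f/m_f$-dependent constant from the outer region.

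The principal obstacle is simply the LR input: one needs a Lieb-Robinson bound that is both (i) of the clean exponential form $\e^{-\mu(d-v|r|)}$ without support-size prefactors that would spoil uniformity in $L$, and (ii) uniform in the disorder. In 1D this is essentially free, since supports are intervals (boundary size $\le 2$) and the local terms are uniformly bounded in $\omega$ by construction, so the conventional Lieb-Robinson machinery applies verbatim with constants independent of $L$ and $\omega$. Once that input is in hand, the rest is a bookkeeping exercise in splitting and optimizing, completely analogous to the original argument of Hastings and its adaptation in \cite{HK}.
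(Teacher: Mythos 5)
Your proof is correct and follows essentially the same route as the paper: Fourier inversion to pull out the commutator $[\tau_r(X),Y]$, Lieb--Robinson decay on an inner region $|r|\le T$ against $\norm{\hat f}_1$, and the sub-exponential tail of $\hat f$ on the outer region, with the threshold optimized at $T\sim\dist(X,Y)$. (The paper simply fixes $T=\dist(X,Y)/(2v)$ and uses the $\min\{\cdot,1\}$ form of the LR bound rather than your separate trivial commutator estimate; also note your condition on $c$ should read $c<1/v$, not $c<\mu/v$, to keep $1-vc>0$.)
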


\begin{proof}

We have 
\begin{align}
Xf(H)Y&= X \pa{\int_\R \e^{-irH} \hat f(r) \, \d r}Y= \int_\R   \e^{-irH} \tau_r\pa{ X} Y\hat f(r) \, \d  r \\
\nn & =\int_\R   \e^{-irH}[ \tau_r\pa{ X} ,Y] \hat f(r) \, \d r  + \int_\R   \e^{-irH} Y \tau_r\pa{ X} \hat f(r) \, \d r 
\end{align}

The commutator in the  first term can be estimated by the Lieb-Robinson bound (e.g. \cite{NS}):
\beq\label{LRbound}
\norm{[\tau_r\pa{X}, Y]}\le C\norm{X}\norm{Y}  \min \set{\e^{- \mu_1 \pa{\dist (X,Y)-v\abs{r}}},1},
\eeq
where $C$, $\mu_1>0$, $v>0$ are constants, independent of $L$ and of the random parameter $\omega$.  We get 
\begin{align}
&\norm{\int_\R   \e^{-irH}[ \tau_r\pa{ X} ,Y] \hat f(r) \, \d r} \\ \nn
&\le C  \norm{X} \norm{Y}\hskip-2pt\pa{\int_{\abs{r} \le  \frac {\dist (X,Y)}{2v}}\hskip-5pt \e^{- \mu_1 \pa{\dist (X,Y)-v\abs{r}}} \abs{\hat f(r)} \, \d r + \int_{\abs{r} \ge \frac {\dist (X,Y)}{2v}}  \abs{\hat f(r)} \, \d r  \hskip-2pt}\\ \nn
&\le C     \norm{X} \norm{Y} \pa{  \norm{\hat f}_1 \e^{- \frac{\mu_1}2 \dist (X,Y)}  + \int_{\abs{r} \ge \frac {\dist (X,Y)}{2v}}  \abs{\hat f(r)} \, \d r }\\ \nn
&\le C     \norm{X} \norm{Y} \pa{  \norm{\hat f}_1 \e^{- \frac{\mu_1}2 \dist (X,Y)}  
+ C_f  \e^{-\frac {m_f}{2}\pa{\frac {\dist (X,Y)}{2v}}^\alpha}
\int_{\R}  \e^{-\frac {m_f}2 \abs{r}^\alpha} \, \d r   },
\end{align}
where we assumed 
${\dist (X,Y)}\ge{2v}$.   The estimate \eq{Htrickest} follows.
\end{proof}

Lemma~\ref{lemHast} will be combined with the following lemma. 

\begin{lemma}\label{leminsertKf} Let $K=[\Theta_0,  \Theta_2]$ and   $f\in C^\infty_c(\R)$ with $\supp f \subset [a_f,b_f]$. Then for all local observables   $X$ and $Y$ we have
\begin{align}\label{KKfK}
\int_\R   \pa{  \e^{-irH} Y \tau_r\pa{ X} }_K \hat f(r) \, \d r  = \int_\R  \pa{  \e^{-irH} Y P_{K_f} \tau_r\pa{ X}}_K \hat f(r) \, \d r,
\end{align}
where 
\beq\label{Kf}
K_f =  K + K -\supp f \subset [2\Theta_0 - b_f, 2\Theta_2- a_f     ].
\eeq
\end{lemma}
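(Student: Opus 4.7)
The plan is a direct spectral-theorem calculation that exploits the support condition on $f$. Since $\cH^{(L)}$ is finite-dimensional, the Hamiltonian has the finite spectral resolution $H = \sum_E E\, P_E$, and both $\e^{-irH}$ and $\e^{irH}$ are finite trigonometric sums in the projections $P_E$. This makes all interchanges of sums and integrals trivially justified.

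First I would write
\begin{equation*}
\pa{\e^{-irH} Y \tau_r(X)}_K = P_K\, \e^{-irH}\, Y\, \e^{irH}\, X\, \e^{-irH}\, P_K
\end{equation*}
and expand the three exponentials spectrally, obtaining
\begin{equation*}
\pa{\e^{-irH} Y \tau_r(X)}_K = \sum_{E_1,E_2,E_3} \chi_K(E_1)\chi_K(E_3)\, \e^{-ir(E_1-E_2+E_3)}\, P_{E_1} Y P_{E_2} X P_{E_3}.
\end{equation*}
Next I would multiply by $\hat f(r)$ and integrate in $r$; by the Fourier inversion formula \eqref{FTf},
\begin{equation*}
\int_\R \e^{-ir(E_1-E_2+E_3)}\, \hat f(r)\, \d r = f(E_1+E_3-E_2),
\end{equation*}
so the left-hand side of \eqref{KKfK} equals
\begin{equation*}
\sum_{E_1,E_2,E_3} \chi_K(E_1)\chi_K(E_3)\, f(E_1+E_3-E_2)\, P_{E_1} Y P_{E_2} X P_{E_3}.
\end{equation*}

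The key step is then the support observation: $f(E_1+E_3-E_2)$ vanishes unless $E_1+E_3-E_2 \in \supp f \subset [a_f,b_f]$, i.e., $E_2 \in E_1+E_3-\supp f$. Restricting to the indices that survive, $E_1,E_3 \in K=[\Theta_0,\Theta_2]$, this forces $E_2 \in K+K-\supp f \subset [2\Theta_0-b_f,\, 2\Theta_2-a_f] = K_f$. Consequently inserting $\chi_{K_f}(E_2)$ into the summand changes nothing. Finally, since $P_{K_f}$ commutes with $\e^{irH}$, inserting $P_{K_f}$ between $Y$ and $\tau_r(X)$ is the same as inserting it between the middle $\e^{irH}$ and $X$, which by the spectral expansion is precisely the restriction $\chi_{K_f}(E_2)$. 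Reassembling the sum and integral in the opposite direction yields the right-hand side of \eqref{KKfK}, proving the identity; the set inclusion in \eqref{Kf} follows from the bounds on $K$ and $\supp f$.

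There is no real obstacle: the whole argument is bookkeeping once one writes the spectral expansion and uses Fourier inversion. The only point worth pausing on is making sure one inserts $P_{K_f}$ in a location where it acts on the correct spectral variable ($E_2$, the middle eigenvalue index), which is why one uses the commutation $P_{K_f}\e^{irH} = \e^{irH}P_{K_f}$ to move it next to $Y$ as stated in the lemma.
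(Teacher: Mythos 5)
Your proof is correct and follows essentially the same approach as the paper: take matrix elements between spectral projections, use Fourier inversion $\int_\R \e^{-irx}\hat f(r)\,\d r = f(x)$ to identify the integral with a functional calculus expression, and then observe that the support of $f$ forces the intermediate spectral parameter into $K_f=K+K-\supp f$, justifying the insertion of $P_{K_f}$. The only cosmetic difference is that you also expand the middle $\e^{irH}$ into its spectral sum, whereas the paper leaves it abstract as $f(E+E'-H)$; these are the same computation.
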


\begin{proof}
Let $K=[\Theta_0,  \Theta_2]$,  $f\in C_c(\R)$ with $\supp f \subset [a_f,b_f]$. Then for all $E,E^\pr \in K$ we have
\begin{align}\nn
& P_E \pa{ \int_\R   \e^{-irH} Y \tau_r\pa{ X} \hat f(r) \, \d r }  P_{E^\pr}=\int_\R  P_E\,  \e^{-irH} Y \e^{irH}{ X} \e^{-irH}P_{E^\pr}\hat f(r) \, \d r\\  & \quad = P_E Y \pa{ \int_\R  \e^{ir(H-E -E^\pr)}\hat f(r) \, \d r}{ X} P_{E^\pr}=  P_E Y f(E+E^\pr - H)   { X} P_{E^\pr} \nn  \\ & \quad = \nn
 P_E Y P_{K_f} f(E+E^\pr - H)   { X} P_{E^\pr} \\ & \quad =  P_E \pa{ \int_\R   \e^{-irH} YP_{K_f} \tau_r\pa{ X} \hat f(r) \, \d r }  P_{E^\pr},\label{P0mag}
\end{align}
where $K_f$ is given in \eq{Kf}. The equality \eq{KKfK} follows.
\end{proof}

\subsection{Counterterms} Given vectors $\psi_1,\psi_2\in \cH\up{L}$, we denote by $T(\psi_1,\psi_2)$  the rank one operator  $T(\psi_1,\psi_2)= \scal{\psi_2, \cdot}\psi_1$.  Recall  
\[ \norm{T(\psi_1,\psi_2)} =\norm{T(\psi_1,\psi_2)}_1=\norm{\psi_1}\norm{\psi_2}.\]
Note that for all observables
 $X$ and $Y$ we have 
\begin{align}
XP_0\up{L} Y& =  T\pa{Y^*\psi_0\up{L}, X\psi_0\up{L}}. \label{Tnotation}
\end{align}

\begin{lemma}\label{lem:spillterms} Let  $H_\omega$  be a random  XXZ spin chain. Consider an interval  $K\subset   [1 -\tfrac 1 \Delta, 1 +\tfrac 1 \Delta]$.  Then there exist constants $\gamma_K >0$  and $R_K$ such that for all  $i,j \in \Z$ with $\abs{i-j}\ge R_K$, we have
\begin{align}\label{eq:1term}
 \E\pa{\liminf_{L\to \infty}\norm{ \pa{\sigma^x_i P\up{L}_0\sigma^x_j}_K}}
 \ge \gamma_K>0,
\end{align}
\beq\label{eq:2terms0}
\E\pa{\liminf_{L\to \infty}\norm{ \pa{\ \sigma^x_i P\up{L}_0\sigma^x_j \pm  \sigma^x_j P\up{L}_0\sigma^x_i}_K}_{2}^2}\ge \gamma_K,
\eeq
and
\begin{align}\label{eq:4terms1}
\E\pa{\liminf_{L\to \infty} \lim_{T\to \infty} \tfrac 1 T \int_0^T \norm{\pa{A\up{L}(t)  -\pa{A\up{L}(t)}^*}_K}_2^2\, \d t} \ge 2 \gamma_K,
\end{align}
where
\begin{align}\label{eq:4terms12}
 A\up{L}(t)& =  \tau\up{L}_t\!\pa{ \sigma^x_i  }P_0\up{L}\sigma^x_j  +\tau\up{L}_t\!\pa{\sigma^x_j } P_0\up{L} \sigma^x_i  . \end{align}
\end{lemma}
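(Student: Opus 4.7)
The plan is to reduce all three inequalities to a one-particle computation. The states $\sigma_k^x \psi_0^{(L)}$ are one-down-spin configurations, so the one-particle sector is invariant under $H^{(L)}$, and under the identification $\sigma_k^x \psi_0^{(L)} \leftrightarrow \delta_k$ with $\ell^2([-L,L])$, $H^{(L)}$ restricts to a discrete random Schr\"odinger operator $h^{(L)}$ (real symmetric) whose almost sure spectrum is contained in $[1-\tfrac{1}{\Delta}, 1+\tfrac{1}{\Delta}+\lambda]$. Writing $u_k^{(L)} := \chi_K(h^{(L)})\delta_k$, the identity \eqref{Tnotation} gives
\begin{align*}
(\sigma_i^x P_0^{(L)}\sigma_j^x)_K = T(u_i^{(L)}, u_j^{(L)}),
\end{align*}
so $\|(\sigma_i^x P_0^{(L)}\sigma_j^x)_K\| = \|u_i^{(L)}\|\,\|u_j^{(L)}\|$, which establishes the structure for \eqref{eq:1term}. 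For \eqref{eq:2terms0}, a direct Hilbert--Schmidt calculation using $\scal{T(v,w),T(v',w')}_{\mathrm{HS}} = \scal{v,v'}\scal{w',w}$ together with the reality of $u_i, u_j$ yields
\begin{align*}
\|T(u_i,u_j) \pm T(u_j,u_i)\|_2^2 = 2\|u_i\|^2\|u_j\|^2 \pm 2\,|\scal{u_i,u_j}|^2.
\end{align*}

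For \eqref{eq:4terms1}, since $H^{(L)}\psi_0^{(L)}=0$ one has $\tau_t(\sigma_k^x)P_0 = e^{itH}\sigma_k^x P_0$, hence $A^{(L)}(t) = e^{itH^{(L)}}\bigl(\sigma_i^x P_0^{(L)}\sigma_j^x + \sigma_j^x P_0^{(L)}\sigma_i^x\bigr)$. Writing $(A(t)-A(t)^*)_K = S(t)-S(t)^*$ with $S(t) = T(e^{itH_K}u_i,u_j) + T(e^{itH_K}u_j,u_i)$, unitarity of $e^{itH_K}$ forces $\|S(t)\|_2^2 = 2\|u_i\|^2\|u_j\|^2 + 2|\scal{u_i,u_j}|^2$ to be $t$-independent, while expanding $\tr(S(t)^2)$ in the almost-surely simple eigenbasis of $H_K^{(L)}$ produces only oscillating factors $e^{it(E_n+E_m)}$ with $E_n + E_m \ge 2(1-\tfrac{1}{\Delta}) > 0$, which vanish under time averaging. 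The upshot is
\begin{align*}
\lim_{T\to\infty} \tfrac{1}{T}\int_0^T \|(A^{(L)}(t)-A^{(L)}(t)^*)_K\|_2^2\, dt = 4\|u_i^{(L)}\|^2\|u_j^{(L)}\|^2 + 4\,|\scal{u_i^{(L)},u_j^{(L)}}|^2.
\end{align*}

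All three estimates are now reduced to two inputs. Exponential decay of $\E(|\scal{u_i^{(L)},u_j^{(L)}}|^2)$ in $|i-j|$ is immediate from Property DL restricted to the one-particle sector: there $\cN_k$ acts as the rank-one projection onto $\delta_k$, so \eqref{eq:efcorDL} gives $\E(|\scal{\delta_i,\chi_K(h^{(L)})\delta_j}|) \le Ce^{-m|i-j|}$. The main obstacle, and the heart of the proof, is a joint lower bound $\E(\liminf_L \|u_i^{(L)}\|^2\|u_j^{(L)}\|^2) \ge c_K > 0$ for $|i-j| \ge R_K$. I would establish this by combining (a) positivity of the integrated density of states of $h$ on any nontrivial interval inside its almost sure spectrum, which (since $\|u_k^{(L)}\|^2 \le 1$) yields $\P(\|u_k^{(L)}\|^2 \ge N_h(K)/2) \ge N_h(K)/2$ by a reverse-Markov inequality; with (b) approximate independence of the spectral weights at $i$ and $j$, obtained by approximating $\chi_K(h^{(L)})\delta_k$ by its finite-box analogue $\chi_K(h^{\Lambda_k})\delta_k$ on $\Lambda_k = [k-R,k+R]$ at exponential cost in $R$ (a standard consequence of one-dimensional one-particle Anderson localization), so that for $|i-j|>2R$ the two approximants depend on disjoint families of $\omega_\cdot$'s and are independent. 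Choosing $R_K$ large enough for this independence, and also large enough to force $|\scal{u_i,u_j}|^2 \le \tfrac{1}{2}\|u_i\|^2\|u_j\|^2$ in expectation (needed to absorb the $-$ sign in \eqref{eq:2terms0}), delivers all three inequalities with a common constant $\gamma_K$.
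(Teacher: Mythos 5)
Your overall strategy is the one the paper uses: reduce to the one-particle sector via $\sigma_k^x\psi_0^{(L)}\leftrightarrow\delta_k$, identify the operator $(\sigma_i^xP_0^{(L)}\sigma_j^x)_K$ as a rank-one operator built from spectrally filtered delta functions, compute Hilbert--Schmidt norms explicitly, time-average away the oscillating cross terms for (iii), and finally secure a quantitative lower bound by combining positivity of the density of states with a finite-box independence argument. The HS identity and the time-average computation you give are correct (including the observation that $0\notin K$ kills the off-diagonal phases), and your output $4\|u_i\|^2\|u_j\|^2+4|\langle u_i,u_j\rangle|^2$ agrees with the paper's $2\|A_K\|_2^2$.

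There is, however, a genuine gap in how you handle the $\liminf_{L\to\infty}$ in all three inequalities. You set $u_k^{(L)}=\chi_K(h^{(L)})\delta_k$ and then argue about expectations of $\|u_k^{(L)}\|^2$ at fixed $L$, but the statement requires a lower bound on $\E\bigl(\liminf_L\cdots\bigr)$. Strong resolvent convergence $h^{(L)}\to h$ (which is all one has deterministically) gives $f(h^{(L)})\delta_k\to f(h)\delta_k$ only for \emph{continuous} $f$; it does not give convergence of $\chi_K(h^{(L)})\delta_k$ unless one knows the endpoints of $K$ are almost surely not eigenvalues, a fact you neither state nor verify, and which would require extra hypotheses on the single-site distribution. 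The paper sidesteps this by fixing a continuous minorant $f$ with $\chi_{K'}\le f\le\chi_{K\cap\Sigma_1}$ and $\int f^2\,\d\eta=D>0$, using $\|P_K^{(L)}\delta_k\|\ge\|f(h^{(L)})\delta_k\|$ (and $\|X_K\|_2^2\ge\|\sqrt f(h^{(L)})X\sqrt f(h^{(L)})\|_2^2$ for (ii)) before taking $\liminf_L$, so that the limit is controlled by the continuous functional calculus alone. Without this smoothing step, your reduction to a fixed-$L$ computation does not produce the required $\liminf$ bound.

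A second, smaller point: for the finite-box independence you invoke one-particle Anderson localization to approximate $\chi_K(h^{(L)})\delta_k$ by $\chi_K(h^{\Lambda_k})\delta_k$ with exponential error. This works, but it is heavier than necessary and ties the argument to a hypothesis the lemma does not assume. The paper only needs that $\E\|(f(h)-f(h^{(k,\ell)}))\delta_k\|\to0$ as $\ell\to\infty$, which again follows from strong resolvent convergence and dominated convergence --- no localization input at all. This matters because Lemma~\ref{lem:spillterms} is applied, among other things, to \emph{disprove} certain localization-type statements, so it is cleanest to keep its proof free of localization hypotheses.
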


\begin{proof}Let $H$ be a random XXZ spin chain, and  let $\mathcal{N} = \sum_{i\in \Z} \mathcal{N}_i$ denote the total (down) spin number operator on $\cH$. The self-adjoint operator $\mathcal{N}$ has pure point spectrum.  Its eigenvalues are $N=0,1, 2, \ldots$, and the corresponding eigenspaces $\cH_N$  are spanned by all the spin basis states with $N$ down spins.  Since $[H, \mathcal{N}]=0$, the eigenspaces $\cH_N$ are left invariant by $H$. The  restriction $H_N$ of $H$ to 
 $\cH_N$ is unitarily equivalent to an $N$-body discrete Schr\"odinger operator restricted to the fermionic subspace (e.g., \cite{FS,EKS}).

In particular,  $H_1=H_{\omega,1}$ is unitarily equivalent to an one-dimensional  Anderson model:
\begin{align}\label{AndH}
H_{\omega,1} \cong  -\tfrac{1}{2\Delta}\mathcal{L}_1+\pa{1-\tfrac{1}{\Delta}} +\lambda V_\omega
\qtx{on}\ell^2(\Z),
\end{align}
where $\cL_1$ is the graph Laplacian on  $\ell^2(\Z)$ and $V_\omega$ is the random potential given by $V_\omega(i)=\omega_i$ for $i \in \Z$. 

The same is true for restrictions to finite intervals $[-L,L]$, where we have the unitary equivalence  
\begin{align}\label{AndHL}
H_{\omega,1}\up{L} \cong  -\tfrac{1}{2\Delta}\mathcal{L}_1\up{L}+\pa{1-\tfrac{1}{\Delta}} +\lambda V_\omega +  \left(\beta-\tfrac{1}{2}(1-\tfrac{1}{\Delta})\right)\pa{ \chi_{\set{-L}}+\chi_{\set{L}}},
\end{align}
acting on $\ell^2([-L,L])$,
where now $\cL_1\up{L}$ is the graph Laplacian on  $\ell^2([-L,L])$
 (e.g., \cite{EKS}). Note that $H_{\omega,1}\up{L}$ is the restriction of $H_{\omega,1}$ to 
 $\ell^2([-L,L])$, up to a boundary term.
 
In what follows we will consider these unitary equivalences as equalities. In this case,
if  $i\in [-L,L]$ we have   $\sigma^x_i \psi_0\up{L} =\delta_i \in \ell^2([-L,L])$,   Note that  for the infinite volume Anderson model in \eq{AndH} we have
\beq
\sigma\pa{H_1}\supset  \Sigma_1:= [1 -\tfrac 1 \Delta, 1 +\tfrac 1 \Delta] \quad \text{almost surely}.
\eeq 

 The following holds for all
$\omega \in [0,1]^{\Z}$:   We have  $\lim_{L\to \infty} H_1\up{L}= H_1$ in the strong resolvent sense, and hence  $\lim_{L\to \infty} f\pa{H_1\up{L}}= f\pa{H_1}$ strongly for all bounded continuous functions $f$ on $\R$. (For an interval $J\subset \Z$, we consider $\ell^2(J)$ as a subspace of $ \ell^2(\Z)$ in the obvious way: $ \ell^2(\Z) =\ell^2(J)\oplus \ell^2(\Z\setminus J)$.)
  In particular, for $f$ real valued with $\norm{f}_\infty\le 1$, 
\beq\label{everywherelim}
\sup_L \norm{ f(H_1\up{L}) \delta_u}\le 1 \sqtx{and}  \lim_{L\to \infty} f\pa{H_1\up{L}}\delta_u = f\pa{H_1}\delta_u \sqtx{for all} u \in \Z.
\eeq
 Moreover, 
\begin{align}\nn
\lim_{L\to \infty}  \E\pa{ \norm{ f(H_1\up{L}) \delta_u}^2}& =\E\pa{ \norm{ f(H_1) \delta_u}^2}
=\E\pa{ \scal{\delta_u, \pa{f(H_1)}^2 \delta_u}}\\
&  = \E\pa{ \scal{\delta_0, \pa{f(H_1)}^2 \delta_0}}= \int f^2(t) \, \d\eta(t), \label{conveta}
\end{align}
where $\eta$ is the density of states measure for the Anderson model $H_1$.
It also   follows from \eq{everywherelim} by bounded convergence that
\begin{align}\label{limLinfty}
\lim_{L\to \infty} \E\pa{\norm{ f(H_1\up{L}) \delta_j} \norm{  f(H_1\up{L}) \delta_i}}=\E\pa{\norm{ f(H_1) \delta_j} \norm{  f(H_1) \delta_i}}.
\end{align}

We now fix a function $f\in C_c(\R)$ such that  $\supp f \subset K\cap \Sigma_1$ and $\chi_{K^{\pr}} \le f \le\chi_{K\cap \Sigma_1}$ for some nonempty interval $K^\pr \subset K\cap \Sigma_1$. Note that
\beq\label{convetaD}
D:= \int f^2(t) \, \d\eta(t) \ >0,
\eeq

   Given $i,j\in \Z$,  if  $i,j\in [-L,L] $, we have  
\begin{align}\nn
&\norm{ \pa{\sigma^x_i P\up{L}_0\sigma^x_j}_K}=\norm{ P_K\up{L}  \sigma^x_j \psi_0\up{L}} \norm{ P_K\up{L} \sigma^x_i \psi_0\up{L}}\\ &\qquad \quad  =\norm{ P_K\up{L} \delta_j} \norm{ P_K\up{L}  \delta_i}\ge 
\norm{ f(H_1\up{L}) \delta_j} \norm{  f(H_1\up{L}) \delta_i}, \label{intf}
\end{align}
and  hence it follows from \eq{everywherelim} that
\begin{align}
&\liminf_{L\to\infty}\norm{ \pa{\sigma^x_i P\up{L}_0\sigma^x_j}_K} \ge 
\norm{ f(H_1) \delta_j} \norm{  f(H_1) \delta_i}, \label{intf58}
\end{align}

Given $u\in \Z$, let $H_1\up{u,L}$ denote the restriction of $H_1$ to the interval $[u-L,u+L]=u +[-L,L]$, and note that \eq{everywherelim} and \eq{conveta} hold with $H_1\up{u,L}$ substituted for $H_1\up{L}=H_1\up{0,L}$.  In particular,
 \begin{align}\label{epsL0}
\lim_{L\to \infty} \eps\up{u,L}=0, \qtx{where} \eps\up{u,L}= \E\pa{\norm{ \pa{f(H_1\up{u,L})-f(H_1)} \delta_u}},
 \end{align}
and note that $\eps_L= \eps\up{u,L}$ is independent of $u\in \Z$. Moreover,
\beq\label{1>2}
\E\pa{\norm{ f(H_1 )\delta_u} }\ge \E\pa{\norm{ f(H_1 )\delta_u}^2 }.
\eeq

It follows that for all  $i,j\in \Z$ and $ L\in \N$, with $ \abs{i-j} \ge 3L$ we have ($\eps_L \le 1$)
\begin{align}\nn 
&\E\pa{\norm{ f(H_1) \delta_j} \norm{  f(H_1) \delta_i}}
\ge \E\pa{\norm{ f(H_1\up{j,L}) \delta_j} \norm{  f(H_1\up{i,L}) \delta_i}}- 2 \eps_L \\ \nn &\quad \quad
= \E\pa{\norm{ f(H_1\up{j,L}) \delta_j} } \E\pa{ \norm{  f(H_1\up{i,L}) \delta_i}}- 2\eps_L 
\\ \nn &\quad \quad
\ge \E\pa{\norm{ f(H_1 )\delta_j} } \E\pa{ \norm{  f(H_1) \delta_i}}- 4\eps_L 
\\ \nn &\quad \quad
\ge \E\pa{\norm{ f(H_1 )\delta_j}^2 } \E\pa{ \norm{  f(H_1) \delta_i}^2}-4\eps_L \\  &\quad \quad
= \E\pa{\norm{ f(H_1 )\delta_0}^2 }^2- 4\eps_L \ge D^2 - 4\eps_L\ge \tfrac 12 D^2 \label{Effij}
 \end{align}
 where we used \eq{epsL0},  the fact that the collections of random variables $\set{\omega_k}_{k\in [j-L,j+L]}$ and $\set{\omega_s}_{s\in [i-L,i+L]}$ are   independent, used \eq{epsL0} again,  used \eq{1>2}, and   the last inequality follows from \eq{conveta}, \eq{convetaD}, and \eq{epsL0}, taking $L$ sufficiently large.  In particular, there exists $\tilde R$ such that \eq{Effij}
holds if $\abs{i-j} \ge \tilde R$.

It follows from \eq{intf58}  and \eq{Effij} that for $\abs{i-j} \ge \tilde R$ we have
\begin{align}
 \E \pa{ \liminf_{L\to \infty}\pa{\norm{ P\up{L}_K  \sigma^x_j \psi_0\up{L}} \norm{ P_K\up{L}  \sigma^x_i \psi_0\up{L}}}} \ge \tfrac 12 D^2,
\end{align}
which is  \eq{eq:1term}.

Note that   $\sqrt{f} \in C_c(\R)$  and  $\chi_{K^{\pr}} \le f\le \sqrt{f}  \le\chi_{K\cap \Sigma_1}$.  Given an observable $X$ we have
\begin{align}\nn
&\norm{X_K}_{2}^2=\norm{P_K\up{L}XP_K\up{L}}_{2}^2=\tr \pa{P_K\up{L}X^*P_K\up{L}X P_K\up{L}} \\ &\quad\nn \ge \tr \pa{P_K\up{L}X^*f(H\up{L})X P_K\up{L}}=  \tr \pa{\sqrt{f}(H\up{L})X P_K\up{L}X^*\sqrt{f}(H\up{L})}\\ &\quad\ge   \tr \pa{\sqrt{f}(H\up{L})X f(H\up{L})X^*\sqrt{f}(H\up{L})}  = \norm{\sqrt{f}(H\up{L})X \sqrt{f}(H\up{L})}_{2}^2. \label{eq:Ptof}
\end{align}
Thus, we can estimate 
\begin{align}\nn
&\norm{ \pa{\sigma^x_i P_0\up{L}\sigma^x_j \pm  \sigma^x_j P_0\up{L}\sigma^x_i}_K}_{2}^2  \\ & \nn \quad \ge \norm{\sqrt{f}({H_1\up{L}}) \pa{\sigma^x_i P_0\up{L}\sigma^x_j \pm   \sigma^x_j P_0\up{L}\sigma^x_i}\sqrt{f}({H_1\up{L}})}_{2}^2\\ & \nn \quad  = \norm{T\pa{\sqrt{f}({H_1\up{L}}) \delta_i, \sqrt{f}({H_1\up{L}})\delta_j}\pm  T\pa{ \sqrt{f}({H_1\up{L}}) \delta_j,\sqrt{f}({H_1\up{L}})\delta_i}}_2^2  \\ & \quad  = \nn
2 \pa{\norm{ \sqrt{f}({H_1\up{L}}) \delta_i}^2\norm{ \sqrt{f}({H_1\up{L}}) \delta_j}^2\pm  \Rea  \pa{\scal{\delta_j,  f ({H_1\up{L}})\delta_i }}^2 }\\ & \quad  \ge 2 \pa{\norm{ {f}({H_1\up{L}}) \delta_i}^2\norm{ {f}({H_1\up{L}}) \delta_j}^2- \abs{\scal{\delta_j,  f ({H_1\up{L}})\delta_i }} }.
\label{sps}
\end{align}
It follows from \eq{sps} and  \eq{everywherelim} that
\begin{align}\nn
&\liminf_{L\to \infty} \norm{ \pa{\sigma^x_i P_0\up{L}\sigma^x_j \pm  \sigma^x_j P_0\up{L}\sigma^x_i}_K}_{2}^2  \\ & \qquad \qquad \ge  2 \pa{\norm{ {f}({H_1}) \delta_i}^2\norm{ {f}({H_1}) \delta_j}^2-   \abs{\scal{\delta_j,  f ({H_1})\delta_i }} }.\label{sps2}
\end{align}

Given a scale  $\ell$ and $ \abs{i-j} \ge 3\ell$, we have 
\begin{align}\label{sps45}
\abs{\scal{\delta_j,  f (H_1)\delta_i }}= \abs{\scal{\delta_j, \pa{ f (H_1)-f(H_1\up{i,\ell})}\delta_i }}\le
\norm{\pa{ f (H_1)-f(H_1\up{i,\ell})}\delta_i }
\end{align}
Since $\E\pa{\norm{ {f}(H_1) \delta_i}\norm{ {f}(H_1) \delta_j}}\le \pa{\E\pa{\norm{ {f}(H_1) \delta_i}^2\norm{ {f}(H_1) \delta_j}^2}}^{\frac 12}$,
it follows from \eq{sps2},  \eq{sps45},  \eq{Effij} and  \eq{epsL0}, that there exists $\ell_1$, such that for   $ \abs{i-j} \ge 3\ell_1$  we have
\begin{align}\nn
&\E \pa{\liminf_{L\to \infty} \norm{ \pa{\sigma^x_i P_0\up{L}\sigma^x_j \pm  \sigma^x_j P_0\up{L}\sigma^x_i}_K}_{2}^2} \\ & \nn \quad \ge 2  \pa{\pa{\E\pa{\norm{ {f}(H_1) \delta_i}\norm{ {f}(H_1) \delta_j}}}^2 -\E\pa{\norm{\pa{ f (H_1)-f(H_1\up{i,\ell_1})}\delta_i }} } \\ &  \quad \ge 2 \pa{ \tfrac 1 4 D^4 -\eps_{\ell_1}}\ge  \tfrac 1 4 D^4 .
\end{align}
The estimate \eq{eq:2terms0} is proven.

Now let  $A\up{L}(t)$ be as in \eq{Zt} (we mostly omit $L$ from the notation), and let
\begin{align}\nn
Z\up{L}(t) &=   \pa{A\up{L}(t)  -\pa{A\up{L}(t)}^*}_K  \\
&= \e^{itH} \pa{ \sigma^x_i  P_0\sigma^x_j  +\sigma^x_j  P_0 \sigma^x_i }_K - \pa{ \sigma^x_i  P_0\sigma^x_j  +\sigma^x_j  P_0 \sigma^x_i }_K\e^{-itH} \nn  \\
&= \e^{itH} A_K - A_K\e^{-itH} = B_t - B^*_t, \label{Zt}
\end{align}
where
\begin{align}
A= A\up{L}(0)= \sigma^x_i  P_0\sigma^x_j  +\sigma^x_j  P_0 \sigma^x_i = A^*\qtx{and} B_t= \e^{itH} A_K.
\end{align}
We have
\begin{align}
\norm{Z\up{L}(t)}_2^2& = \norm{B_t - B^*_t}_2^2\nn \\
&   =\tr \pa{B_tB^*_t}+\tr \pa{B^*_tB_t} - \tr \pa{B_tB_t} - \tr \pa{B^*_t B^*_t}\nn \\
&   =2 \norm{A_K}^2_2 - 2 \Rea \tr \pa{ P_K\e^{itH} A P_K\e^{itH} AP_K}.
\end{align}
Since
\begin{align}
\tr \pa{ P_K\e^{itH} A P_K\e^{itH} AP_K}= \sum_{E,E^\pr \in \sigma_K}  \e^{it(E+ E^\pr)} \tr \pa{ P_EA P_{E^\pr} AP_E},
\end{align}
 and $0\notin K$, and $\lim_{T\to \infty} \tfrac 1 T \int_0^T \e^{it s} \, \d t =0$ if $s\ne 0$,we conclude that
 \begin{align}
\lim_{T\to \infty} \tfrac 1 T \int_0^T \norm{Z\up{L}(t)}_2^2\, \d t = 2 \norm{A_K}^2_2= 2 \norm{\pa{\sigma^x_i  P_0\sigma^x_j  +\sigma^x_j  P_0 \sigma^x_i}_K}^2_2.
 \end{align}
The estimate \eq{eq:4terms1} now follows from \eq{eq:2terms0}
\end{proof}

\section{Optimality of the droplet spectrum}\label{secopt}
We are ready to prove Theorem~\ref{thmrigid}.

 \begin{proof}[Proof of Theorem~\ref{thmrigid}]
 
 Suppose Property DL  is valid for a disordered XXZ spin chain $H$
with $\Theta_1 > 2 \Theta_0$.  Let    $K=  [\Theta_0, \Theta_2]$, where  $ \Theta_0 < \Theta_2 < \Theta_1 $,  and $\eps= \min \set{ \Theta_1 - 2 \Theta_2, \Theta_0} >0$.   We pick and fix  a Gevrey class function $h$ such that 
  \[0\le h\le 1,\; \supp h \subset (-\eps,\eps),\; h(0)=1,  \sqtx{and} \abs{\hat h(t)}\le C \e^{-c\abs{t}^{\frac 12}}\sqtx{for all} t\in\R,\]
in particular,  $ \norm{\hat h}_1<\infty$.   Note that $P_0= h(H)$.

 Let $X,Y$ be local observables with  $X^{+,+}=Y^{+,+}=0$.   It follows from Lemmas~\ref{lemHast} and \ref{leminsertKf} that 
 \begin{align}
\norm{ \pa{XP_0Y}_K} &= \norm{ \pa{Xh(H)Y}_K}\\ \nn &
\le  C  \norm{X} \norm{Y}\e^{- m_1\pa{ \dist (X,Y)}^{\frac 12}} +C^\pr  \sup_{r\in \R}\norm { \pa{Y P_{K_h} \tau_r\pa{ X} }_K} ,
 \end{align}
 where 
 \beq
 K_h \subset  [2\Theta_0 - \eps, 2\Theta_2+ \eps  ]\subset  [\Theta_0,\Theta_1]=I.
 \eeq
 
It follows from Lemma~\ref{lemPXtgY}  that
 \begin{align}\nn
  \E\pa{\sup_{r\in \R}\norm { \pa{Y P_{K_h} \tau_r\pa{ X} }_K}}&\le \E\pa{\sup_{r\in \R}\norm { \pa{Y P_{K_h} \tau_r\pa{ X} }_I}}\\ & \le C\norm{X}\norm{Y}  \e^{-\frac 1 8 m\dist (X,Y)} ,
 \end{align}
so we conclude that 
\begin{align}\label{estcont}
\E \pa{\norm{ \pa{XP_0Y}_K}}\le  C\norm{X}\norm{Y} \e^{- m_2 \pa{ \dist (X,Y)}^{\frac 12}},
\end{align}
where $m_2= \min\set{m_1,\frac 1 8 m}>0$.

For all $k\in \Z$ we have $\sigma^x_k=\pa{\sigma^x_k}^*$,  $\pa{\sigma^x_k}^{+,+}=0$, and $\norm{\sigma^x_k}=1$.  Thus it follows from \eq{estcont} that for all $i, j \in [-L,L]$ we have (we put $L$ back in the notation)
\begin{align}\label{estcont9}
\E \pa{\norm{ \pa{ \sigma^x_i P\up{L}_0 \sigma^x_j }_K}}\le    C\e^{- m_2 \pa{ \abs{i-j}}^{\frac 12}},
\end{align}
uniformly in $L$.
  
If $H$ is a random XXZ spin chain, \eq{estcont9} contradicts \eq{eq:1term} in  Lemma~\ref{lem:spillterms}
if $\abs{i-j}$ is sufficiently large.  Thus we conclude that we cannot have  $\Theta_1> 2\Theta_0$, that is, we must have $\Theta_1\le2\Theta_0$.
\end{proof}

\section{Non-spreading of information}\label{secnonsp}

In this section we  prove Theorem~\ref{corquasloc}.

\begin{proof}[Proof of Theorem \ref{corquasloc}]
Let  $H_\omega$  be a disordered  XXZ spin chain satisfying Property DL. Let
$X$ be a local observable with support $\cS=\cS_X=[s_X,r_X]$.  In view of  \eq{X++0} we can assume $X^{+,+}=0$.

We take $\ell \ge 1$, and
set  (recall \eq{cSell})
\begin{align}{\mathcal O}&=[-L,L]\setminus {\cS}_{\frac \ell 2} = [-L, s_X - \tfrac \ell 2) \cup  (r_X +\tfrac \ell 2,L] \\ \nn
 \mathcal T& = {\cS}_{ \ell }\cap {\mathcal O}= [s_X-\ell, s_X - \tfrac \ell 2)\cup  (r_X +\tfrac \ell 2,r_X+\ell]
 \end{align}

 We start by proving that
\beq\label{compproof2999}
\E \pa{\sup_{t\in \R}\norm{ \pa{{P_+^{\pa{\mathcal O}}\tau_t\pa{ X_{I_0} }P_+^{\pa{\mathcal O}}} - \tau_t\pa{ X }}_{I_0}}_1} \le C  \|X\|\e^{- \frac{1}{16} m\ell}.
 \eeq
 Given an observable $Z$,
we write   $Z_{I_0}=Z_1+Z_2+Z_3+Z_4$, where 
\beq
Z_1=P_0 ZP_0;\ Z_2=P_I Z P_I= Z_I; \ Z_3=P_0 ZP_I; \ Z_4=P_I ZP_0.
\eeq
 Since $\pa{X_i}_{I_0}=\pa{X_{I_0}}_i=X_i$ and $\tau_t(X_i)=\pa{\tau_t(X)}_i$ for $i=1,2,3,4$,
  $X_1 = X^{+,+}_1= 0$, and $\pa{X_4}^*=\pa{X^*}_3$,  
to prove  \eq{compproof2999} it suffices to prove
 \beq\label{compproof2}
\E \pa{\sup_{t\in \R}\norm{ \pa{P_+^{\pa{\mathcal O}}\tau_t\pa{ X_{I_0} }P_+^{\pa{\mathcal O}}}_i - \tau_t\pa{ X_i }}_1} \le C  \|X\|\e^{- \frac{1}{16} m\ell}
\eeq
in the cases  $i=2,3$.   
 
If $i=3$, we have 
\begin{align} \nn
&\norm{\tau_t\pa{ X_3 }-\pa{P_+^{\pa{\mathcal O}}\tau_t\pa{ X_{I_0}}P_+^{\pa{\mathcal O}}}_3}_1 =\norm{\pa{\tau_t\pa{ X_{I_0} }-P_+^{\pa{\mathcal O}}\tau_t\pa{ X_{I_0} }P_+^{\pa{\mathcal O}}}_3}_1\\ \nn &\quad  \qquad =\norm{ \pa{\tau_t\pa{ X_{I_0}}P_-^{{\mathcal O}}}_3}_1= \norm{P_0 X P_-^{(X)}\e^{-itH} P_I P_-^{{\mathcal O}} P_I}_1
\\  & \quad \qquad  \le \norm{X}\norm{P_-^{(X)} \e^{-itH}P_I P_-^{{\mathcal O}} }_1, \label{Pplusinsert} 
\end{align} 
where we used $P_0 X=P_0 X P_-^{(X)}$  since  $X^{+,+}=0$.  Thus it  follows from  \eqref{P-gP-} that
\begin{align}
\E\pa{\sup_{t\in \R}\norm{\tau_t\pa{ X_3 }-\pa{P_+^{\pa{\mathcal O}}\tau_t\pa{ X_{I_0}}P_+^{\pa{\mathcal O}}}_3}_1}\le C  \|X\|\e^{- \frac{1}{2} m\ell}.
\end{align}
If  $i=2$, recall that $Z_2=Z_I$. Since $P_I P_+^{\pa{\mathcal O}}P_0 = P_I P_0 =0$, we have 
\beq
\pa{P_+^{\pa{\mathcal O}}\tau_t\pa{ X_{I_0} }P_+^{\pa{\mathcal O}}}_I= \pa{P_+^{\pa{\mathcal O}}\tau_t\pa{ X_I}P_+^{\pa{\mathcal O}}}_I.
\eeq
Thus
\begin{align}\nn 
&\norm{\tau_t\pa{ X_I}-\pa{P_+^{\pa{\mathcal O}}\tau_t\pa{ X_{I_0} }P_+^{\pa{\mathcal O}}}_I}_1\\ & \qquad  \qquad  = \nn  \norm{\pa{\tau_t\pa{ X_I } P_-^{{\mathcal O}}}_I+ \pa{P_-^{{\mathcal O}}\tau_t\pa{ X_I } P_+^{\pa{\mathcal O}}}_I}_1\\ &\qquad  \qquad  \le  \norm{\pa{\tau_t\pa{ X_I } P_-^{{\mathcal O}}}_I}_1+
\norm{\pa{ P_-^{{\mathcal O}}\tau_t\pa{ X_I }}_I}_1  \nn  \\ &\qquad  \qquad =\norm{\pa{\tau_t\pa{ X_I } P_-^{{\mathcal O}}}_I}_1+\norm{\pa{\tau_t\pa{ X^*_I } P_-^{{\mathcal O}}}_I}_1.
\end{align}
Since 
\begin{align}\label{eq:spplfk}
\norm{\pa{\tau_t\pa{ X_I }P_-^{{\mathcal O}}}_I}_1= \norm{\pa{\tau_t\pa{ X } P_I P_-^{{\mathcal O}}}_I}_1,
\end{align}
it follows from  Lemma \ref{lemPXtgY} that
\begin{align}
\E\pa{\sup_{t\in \R}\norm{\tau_t\pa{ X_I}-\pa{P_+^{\pa{\mathcal O}}\tau_t\pa{ X_{I_0} }P_+^{\pa{\mathcal O}}}_I}_1}\le C  \|X\|\e^{- \frac{1}{16} m\ell}.
\end{align}
This finishes the proof of \eqref{compproof2}, and hence of \eq{compproof2999}.

 We now observe that for all observables $Z$ we have
 \beq\label{tildeZ}
P_+^{({\mathcal O})}Z P_+^{\pa{\mathcal O}}=\tilde Z P_+^{\pa{\mathcal O}}=P_+^{\pa{\mathcal O}} \tilde Z ,
\eeq
where $\tilde Z  $ is an observable with     $\cS_{\tilde Z}={\cS}_{\frac \ell 2}$ and $\|\tilde Z  \|\le \|Z\|$. To see this,  we  write the Hilbert space as $\mathcal{H}\up{L} = \mathcal{H}_{\mathcal O} \otimes \mathcal{H}_{\cS_{\frac \ell 2} }$,  and let $\psi_{\mathcal O}= \otimes_{i\in {\mathcal O}} \, e_{+}$ be the all spins up vector in $\mathcal{H}_{{\mathcal O}}$. We define $T: \mathcal{H}_{{\cS}_{\frac \ell 2} }\to \mathcal{H}\up{L}$ by $T \eta= \psi_{\mathcal O} \otimes \eta$ and
$R:\mathcal{H}\up{L} \to \mathcal{H}_{{\cS}_{\frac \ell 2} }$ by $P_+^{(\mathcal{O})} \varphi = \psi_{\mathcal O} \otimes R\varphi$. i.e., $P_+^{(\mathcal{O})} = T R$.  Note $\norm{T},\norm{R}\le 1$.  Given an observable $Z$, we define $\hat Z :\mathcal{H}_{{\cS}_{\frac \ell 2} } \to\mathcal{H}_{{\cS}_{\frac \ell 2} }$  by $ \hat Z= RZT$.  Then   $\tilde Z= I_{\mathcal{H}_{\mathcal O} } \otimes \hat Z$ satisfies \eq{tildeZ}.

It follows from \eq{compproof2999} and \eq{tildeZ} that
\beq\label{compproof2111}
\E \pa{\sup_{t\in \R}\norm{ \pa{P_+^{\pa{\mathcal O}} \widetilde{\tau_t\pa{  X_{I_0}  }} - \tau_t\pa{ X }}_{I_0}}_1} \le C  \|X\|\e^{- \frac{1}{16} m\ell}. 
 \eeq

Since $\widetilde{\tau_t\pa{  X_{I_0}  }}$ does not have support in $\cS_\ell$, 
we now define
\beq\label{eq:X_ell}
X_\ell(t)=P_+^{\pa{\mathcal T}} \widetilde{\tau_t\pa{  X_{I_0}  }} = \widetilde{\tau_t\pa{  X_{I_0}  }} P_+^{\pa{\mathcal T}} \qtx{for} t\in \R,
\eeq
an observable with support   in ${\cS}_{\frac \ell 2} \cup \mathcal T= {\cS}_{\ell }  $.
and claim that $X_\ell(t)$ satisfies \eqref{qlocI}. 

To show that \eqref{qlocI} follows from \eq{compproof2111}, we consider an observable $Y$ with $\cS_Y=\mathcal{O}^c= {\cS}_{\frac \ell 2}$, and note that
\beq \label{replaceOT1}
\pa{P_+^{\pa{\mathcal T}}-P_+^{\pa{\mathcal O}}} Y= P_-^{{\mathcal O}\setminus\mathcal T}P_+^{\pa{\mathcal T}}Y. 
\eeq
Since $P_0P_-^{{\mathcal O}\setminus\mathcal T}=P_-^{{\mathcal O}\setminus\mathcal T}P_0=0$, we have
\begin{align} \label{replaceOT2}
\pa{ P_-^{{\mathcal O}\setminus\mathcal T}P_+^{\pa{\mathcal T}}Y}_{I_0}=\pa{ P_-^{{\mathcal O}\setminus\mathcal T}P_+^{\pa{\mathcal T}}Y}_{I}. 
\end{align}
 
We now apply \eq{replaceOT1} and \eq{replaceOT2} with $Y= \widetilde{\tau_t\pa{  X_{I_0}  }} $.
We have 
\begin{align}\nn
& P_+^{\mathcal{O}} \pa{\widetilde{\tau_t\pa{  X_{I_0}  }}}^{+,+} P_+^{\mathcal{O}}  =P_+^{\mathcal{O}}P_+^{\mathcal{O}^c} \widetilde{\tau_t\pa{  X_{I_0}  }}P_+^{\mathcal{O}^c}P_+^{\mathcal{O}}\\ \nn
& = P_+^{\mathcal{O}^c}P_+^{\mathcal{O}} \widetilde{\tau_t\pa{  X_{I_0}  }}P_+^{\mathcal{O}}P_+^{\mathcal{O}^c} = P_+^{\mathcal{O}^c}P_+^{\mathcal{O}} {\tau_t\pa{  X_{I_0}  }}P_+^{\mathcal{O}}P_+^{\mathcal{O}^c}= P_0 {\tau_t\pa{  X_{I_0}  }} P_0 \\ \
& =P_0 X P_0 = P_0 X^{+,+} P_0=0,
\end{align}
where we used \eq{tildeZ}, $P_+^{\mathcal{O}} P_+^{\mathcal{O}^c}=P_0$ and $X^{+,+} =0$.
Since $\widetilde{\tau_t\pa{  X_{I_0}  }}$ is supported on $\mathcal{O}^c$, we conclude that
$\pa{\widetilde{\tau_t\pa{  X_{I_0}  }}}^{+,+}=0$.    Thus we only need to estimate
$\pa{ P_-^{{\mathcal O}\setminus\mathcal T}P_+^{\pa{\mathcal T}}Y^{a,b}}_{I}$, where
$Y= \widetilde{\tau_t\pa{  X_{I_0}  }} $ and  $a,b=\pm $, but either $a=-$ or $b=-$.   
If $a=-$,  we have
 \begin{align}
P_-^{{\mathcal O}\setminus\mathcal T}P_+^{\pa{\mathcal T}}Y^{-,b}= P_-^{{\mathcal O}\setminus\mathcal T}P_+^{\pa{\mathcal T}}   P_-^{\mathcal{O}^c}  Y^{-,b}= P_-^{{\mathcal O}\setminus\mathcal T}   P_-^{\mathcal{O}^c}    P_+^{\pa{\mathcal T}}   Y^{-,b},
 \end{align}
and hence
\begin{align}
\E \pa{\sup_{t\in \R}\norm{\pa{P_-^{{\mathcal O}\setminus\mathcal T}P_+^{\pa{\mathcal T}}Y^{-,b}}_I}_1} &\le \norm{Y} \E \pa{\norm{P_I P_-^{{\mathcal O}\setminus\mathcal T}P_-^{\mathcal{O}^c}}_1}   \nn \\ & \le C \norm{X} \e^{- {\frac{1}{4}} m\ell},
\end{align}
using  \eqref{P-gP-2}. Since the $b=-$ case is similar we conclude from \eq{replaceOT1},\eq{replaceOT2}, and \eq{eq:X_ell} that
\beq\label{compproof3333}
\E \pa{\sup_{t\in \R}\norm{\pa{P_+^{\pa{\mathcal O}} \widetilde{\tau_t\pa{  X_{I_0}  }} - X_\ell(t)}_{I_0}}_1 }\le C \norm{X} \e^{- {\frac{1}{4}} m\ell}
\eeq

Combining \eq{compproof2111} and \eq{compproof3333} we get \eq{qlocI}.
\end{proof}

\section{Zero-velocity Lieb-Robinson bounds}\label{secLR}

In this section we prove Theorem~\ref{thm:expclusteringgenJ}.

 \begin{proof}[Proof of Theorem~\ref{thm:expclusteringgenJ}]
In view of \eq{X++0}, we can assume $X^{+,+}=Y^{+,+}=0$, and prove  the theorem in this case. This is the only step where we use cancellations from the commutator.  The estimate \eq{eq:dynloc} then follows immediately from Lemma~\ref{lemPXtgY}.

To prove \eq{eq:LRquasimix},  recall $P_{I_0}= P_I + P_0$,  and note that since $X^{+,+}=Y^{+,+}=0$ we have 
$P_0 X P_0= P_0 Y P_0=0$, so
\begin{align}\nn
&\left[ \tau_t\pa{X_{I_0}},Y_{I_0}\right]
 = \left[ \tau_t \pa{X_I},Y_I\right] + P_I\pa{ \tau_t\pa{X}P_0 Y- Y P_0  \tau_t\pa{X} }P_I
 \\   \nn & \quad  +P_0\pa{ \tau_t \pa{X} P_I Y - Y P_I \tau_t\pa{X} }P_0
+\pa{\tau_t\pa{X_I} Y P_0- P_0Y  \tau_t\pa{X_I} }\\  &\quad 
 +\pa{P_0 X \e^{-itH}Y_I - Y_I \e^{itH} XP_0} .\label{withcom}
\end{align}
Note that  $\left[ \tau_t\pa{X_I},Y_I\right] $ can be estimated  by \eq{eq:dynloc}.  We have
\begin{align}\nn
\norm{P_0 \tau_t\pa{X}P_I YP_0}_1 &= \norm{P_0 \tau_t\pa{X^{+,-}}P_I Y^{-,+}P_0}_1\\
& \le \norm{X}\norm{Y}\norm{ P_{-} ^{\pa{X}} \e^{-itH}P_I  P_{-} ^{\pa{Y}}}_1,\label{XJYP145}
\end{align}
so it can be estimated by \eq{P-gP-2}, with a similar estimate for $\norm{P_0 YP_I  \tau_t\pa{X}P_0}_1$.  Moreover,
\begin{align}\label{XJYP1}
&\norm{\tau_t\pa{X_I} Y P_0}_1=  \norm{\tau_t\pa{X_I} Y^{-,+} P_0}_1\\ \nn & \quad \le 
 \norm{X}\norm{Y} \norm{P_{-} ^{\pa{X}} \e^{-itH}P_IP_{-} ^{\pa{Y}}  }_1 + \norm{Y}\norm{P_I  X^{-,+} \e^{-itH}P_IP_{-} ^{\pa{Y}}  }_1.
\end{align}
The first term can be be estimated by \eq{P-gP-2}. To estimate the second term, let
$\ell= \dist(X,Y)\ge 1$. Then 
\begin{align}\nn
&\norm{P_I  X^{-,+} \e^{-itH}P_IP_{-} ^{\pa{Y}}  }_1\\ \nn
& \quad \le \norm{P_I  X^{-,+} P_{+} \up{\cS_{Y,\frac \ell 2}} \e^{-itH}P_I P_{-} ^{\pa{Y}}  }_1+ \norm{P_I  X^{-,+}  P_{-} \up{\cS_{Y,\frac \ell 2}}\e^{-itH}P_I P_{-} ^{\pa{Y}}  }_1
\\ & \quad \le \norm{X}\pa{\norm{ P_{+} \up{\cS_{Y,\frac \ell 2}} \e^{-itH}P_I P_{-} ^{\pa{Y}}  }_1+
\norm{P_I   P_{-} \up{\cS_{Y,\frac \ell 2}}  P_{-} ^{\pa{X}} }_1},
\end{align}
where we used $[X^{-,+},P_{-} \up{\cS_{Y,\frac \ell 2}} ]=0$.  Thus the second term in last line of \eq{XJYP1}  can be estimated by  \eq{P-gP-3} and \eq{P-gP-2}.

 The remaining three terms in \eq{withcom} can be similarly estimated.
(Although \eq{withcom} is stated for the commutator, it could have been stated separately for each term of the commutator. The above argument does not use cancellations from the commutator.) Combining all these estimates we get \eq{eq:LRquasimix}.  

It remains to prove \eq{eq:LRquasimix2}. Let  $X, Y$ and $Z$ be local observables.  In view of
 \eq{eq:LRquasimix}, we only need to estimate
 \beq
\E\pa{\sup_{t,s \in \R} \norm{ [P_I \pa{\tau_t\pa{ X}P_0 \tau_s(Y) - \tau_s(Y)P_0 \tau_t\pa{ X }}P_I,  Z_I ]}_1}.
 \eeq
 If we expand the commutator, we get  to estimate several terms, the first one being
\beq
 \E\pa{\sup_{t,s \in \R} \norm{ P_I \tau_t\pa{ X}P_0 \tau_s(Y)P_I Z P_I }_1}\le  \E\pa{\sup_{s \in \R} \norm{P_0 \tau_s(Y)P_IZ P_I }_1}
 \eeq
 This can be estimated as in \eq{XJYP145} and  \eq{XJYP1}, and the other terms can be similarly estimated, yielding 
\eq{eq:LRquasimix2}.

We will now  show  that for   the random   XXZ spin chain  the estimate \eq{eq:LRquasimix}    is not true  without the counterterms.   In fact, a stronger statement holds.   Let now $H$ be a random XXZ
 spin chain, and assume  that for all local observables $X$ and $Y$ we have 
 \begin{align}
  \E \pa{\sup_{t\in \R}\norm{\left[ \tau_t\pa{X_{I_0}},Y_{I_0}\right]}_1} \le C \|X\| \|Y\|  \Ups\pa{\dist (X,Y)},  
 \label{eq:LRquasimix44} 
 \end{align}
uniformly in $L$, where  the function $\Ups:\N \to [0,\infty)$ satisfies
$\lim_{r\to \infty} \Ups\pa{r}=0$.  Assume  \eq{eq:LRquasimix} holds with the same right hand side as  \eq{eq:LRquasimix44}.

It follows from \eq{eq:LRquasimix} and \eq{eq:LRquasimix44}
that
\begin{align}\nn 
&\E \pa{\norm{ \pa{{ X}P_0Y - YP_0 { X }}_I}_1}\le
\E \pa{\sup_{t\in \R}\norm{ \pa{\tau_t\pa{ X}P_0Y - YP_0 \tau_t\pa{ X }}_I}_1}\\ &\nn \hskip20pt \le 
 \E \pa{\sup_{t\in \R}\norm{\left[ \tau_t\pa{X_{I_0}},Y_{I_0}\right]- \pa{\tau_t\pa{ X}P_0Y - YP_0 \tau_t\pa{ X }}_I}_1} \\ \nn & \hskip110pt +  \E \pa{\sup_{t\in \R}\norm{\left[ \tau_t\pa{X_{I_0}},Y_{I_0}\right]}_1}\\ &\hskip20pt \le 
 2C \|X\| \|Y\|  \Ups\pa{\dist (X,Y)}\label{counter4}
\end{align}
In particular, taking $X=\sigma_i^x$ and $Y=\sigma_j^x$ we get (putting $L$ back in the notation)
\begin{align}\label{counter44}
\E \pa{\norm{ \pa{ \sigma_i^x P_0\up{L} \sigma_j^x - \sigma_j^x P_0\up{L}  \sigma_i^x }_I}_1}   \le 
 2C\,   \Ups\pa{\abs{i-j}}.
\end{align}

Thus, using  $\norm{A}_2^2\le \norm{A} \norm{A}_1 $ and  $\norm{\sigma_i^x P_0\up{L} \sigma_j^x - \sigma_j^x P_0\up{L}  \sigma_i^x }\le 2$, we get
\begin{align}\nn  
\E \pa{\norm{ \pa{ \sigma_i^x P_0\up{L} \sigma_j^x - \sigma_j^x P_0\up{L}  \sigma_i^x }_I}^2_2} & \le 2 \E \pa{\norm{ \pa{ \sigma_i^x P_0\up{L} \sigma_j^x - \sigma_j^x P_0\up{L}  \sigma_i^x }_I}_1} \\ & \le
4 C\,   \Ups\pa{\abs{i-j}}.\label{cont13}
\end{align}

Since \eq{cont13} is not compatible with \eq{eq:2terms0}, we have a contradiction, so 
\eq{eq:LRquasimix44}  cannot hold.
 \end{proof}

\section{General dynamical clustering}\label{secdyncl}

We now turn to the proof of  Theorem~ \ref{thmexpclust}.  
We will use the following lemma. 

\begin{lemma}\label{lem:filter} Let  $ \Theta_2 < \Theta_1  $.
 Given  $\alpha\in(0,1)$, there exist constants $m_\alpha>0$ and $C_\alpha<\infty$, such that, given $\Theta_3 \ge  \Theta_1$,  there exists a function  $f \in C_c^\infty(\R)$, such that 
\begin{enumerate}
\item $0\le f\le1$;
\item $\supp f\subset[\Theta_2,\Theta_3 + \Theta_1  - \Theta_2]$;
\item $f(x)=1$ for  $x\in[\Theta_1,\Theta_3 ]$;
\item $\abs{\hat f(t)}\le C_\alpha \e^{-m_\alpha \abs{t}^\alpha}$ for $\abs{t}\ge1$;
\item $\norm{\hat f}_1\le C_\alpha   \max\set{1,\ln \pa{\Theta_3-\Theta_2} }$.

\end{enumerate}
\end{lemma}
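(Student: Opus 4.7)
The plan is to construct $f$ as the convolution of a characteristic function with a fixed Gevrey-class mollifier of width $\delta := \Theta_1 - \Theta_2$. \emph{Step 1 (Gevrey mollifier).} First, fix once and for all a nonnegative $\phi \in C_c^\infty(\R)$ with $\int \phi = 1$, $\supp\phi \subset [-\tfrac{1}{2},\tfrac{1}{2}]$, and
\[
\abs{\hat\phi(t)} \le C_1 \e^{-m_1 \abs{t}^\alpha}, \qquad t \in \R,
\]
for constants $C_1, m_1 > 0$ depending only on $\alpha$. This is the classical construction by an infinite convolution $\phi = \phi_1 * \phi_2 * \cdots$, where each $\phi_n$ is the normalized indicator of an interval of width $a_n$ with $a_n \sim c\, n^{-1/\alpha}$: the condition $\sum_n a_n < \infty$ (valid since $1/\alpha > 1$) ensures compact support, while the Fourier transform is $\prod_n \mathrm{sinc}(a_n t/2)$, and the elementary bound $\prod_n \min\set{1, 2/(a_n\abs{t})}$ combined with a Stirling-type estimate on the partial product over the $\sim \abs{t}^\alpha$ indices with $a_n\abs{t}\ge 2$ yields the stretched-exponential decay.

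\emph{Step 2 (definition of $f$).} With $\delta = \Theta_1-\Theta_2$, set $\phi_\delta(x) := \delta^{-1}\phi(x/\delta)$ (supported in $[-\delta/2,\delta/2]$) and
\[
\psi := \Chi_{[\Theta_1 - \delta/2,\, \Theta_3 + \delta/2]}, \qquad f := \psi * \phi_\delta.
\]
Then $f\in C_c^\infty(\R)$ since $\phi_\delta$ is smooth; $0\le f\le 1$ because $\phi_\delta\ge 0$ integrates to $1$; $\supp f \subset [\Theta_1-\delta, \Theta_3+\delta] = [\Theta_2, \Theta_3+\Theta_1-\Theta_2]$; and $f\equiv 1$ on $[\Theta_1,\Theta_3]$ since for such $x$ the interval $[x-\delta/2, x+\delta/2]$ lies inside $\supp\psi$, so $(\psi * \phi_\delta)(x) = \int \phi_\delta = 1$. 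These verify (i)--(iii).

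\emph{Step 3 (Fourier estimates).} The convolution theorem gives $\hat f$ proportional to $\hat\psi\cdot\hat\phi_\delta$, with $\hat\phi_\delta(t)=\hat\phi(\delta t)$ and
\[
\abs{\hat\psi(t)} \le \min\set{\tfrac{L}{2\pi},\ \tfrac{1}{\pi\abs{t}}}, \qquad L := \Theta_3 - \Theta_2 \ge \delta.
\]
For (iv), note that $\delta$ is fixed once $\Theta_1,\Theta_2$ are, so for $\abs{t}\ge 1$ the Gevrey bound on $\hat\phi$ together with $\tfrac{1}{\pi\abs{t}}\le \tfrac{1}{\pi}$ gives $\abs{\hat f(t)}\lesssim \e^{-m_1\delta^\alpha\abs{t}^\alpha}$, which is the claim with $m_\alpha := m_1\delta^\alpha$ (adjusting $C_\alpha$). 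For (v), I would split $\int\abs{\hat f(t)}\,\d t$ into the three regions $\abs{t}\le 1/L$, $1/L\le\abs{t}\le 1/\delta$, and $\abs{t}\ge 1/\delta$: the first contributes $O(1)$ (from $\abs{\hat\psi}\le L/(2\pi)$ over an interval of length $2/L$); the second contributes $O(\ln(L/\delta))$ (from $\int\abs{t}^{-1}\,\d t$); the third contributes $O(1)$ (from the stretched-exponential tail of $\hat\phi(\delta t)$, via the substitution $s=\delta t$). Since $\delta$ is fixed, $\ln(L/\delta) = \ln L - \ln\delta \lesssim \max\set{1,\ln L}$, giving (v). The only substantive ingredient is the Gevrey mollifier of Step 1; the remainder is routine scaling and Fourier bookkeeping.
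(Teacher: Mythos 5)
Your proof is correct and follows essentially the same route as the paper: the paper's $f(x)=k(x-\Theta_2)-k(x-\Theta_3)$, with $k$ the primitive of a fixed Gevrey-class mollifier $h$ supported in $[0,\Theta_1-\Theta_2]$, is exactly $h*\chi_{[\Theta_2,\Theta_3]}$, which coincides with your $\chi_{[\Theta_1-\delta/2,\,\Theta_3+\delta/2]}*\phi_\delta$ after recentering the mollifier, and the resulting identity $\hat f(t)=\frac{-i}{t}\e^{-i\Theta_2 t}\bigl(1-\e^{-i(\Theta_3-\Theta_2)t}\bigr)\hat h(t)$ is the paper's version of your $\hat\psi\cdot\hat\phi_\delta$ bound. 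Your Step 3 merely fills in the three-region split for the $L^1$ estimate that the paper summarizes as "Parts (iv) and (v) follow."
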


\begin{proof}Let $\theta=  \Theta_1 - \Theta_2$.  Pick a Gevrey class function $h\ge 0$ such that 
 \[ \supp h \subset [0,\theta];\; \int_\R h(x)\, \d x=1; \sqtx{and}  \abs{\hat h(t)}\le C_h \e^{-m_h \abs{t}^{\alpha}}\sqtx{for all} t\in\R,\] 
 where $C_h$ and $m_h >0$ are constants.
Let
 \[k(x)=\int_{-\infty}^x h\pa{y}\d y \qtx{for} x\in \R,\]
then $k\in C^\infty(\R)$ is non-decreasing and satisfies
\[0\le k \le 1,\quad \supp k\subset [0,\infty),\qtx{and} k(x)=1 \mbox{ for } x\ge \theta.\]

Given  $\Theta_3\ge  \Theta_1$, we claim that the function 
\beq
f(x)=k(x-\Theta_2)-k(x-\Theta_3 )
\eeq
has all the required properties.  Indeed, properties (i)--(iii) are obvious.  To finish, we compute
\beq
\hat f(t)=\int_\R \e^{-itx} \pa{\int^{x-\Theta_2}_{x-\Theta_3 }h(y)\,\d y} \d x.
\eeq
Integrating by parts and noticing that the boundary terms vanish, we get
\begin{align}\nn
\hat f(t)&=\tfrac{-i}{t}\int_\R \e^{-itx} \pa{h\pa{x-\Theta_2}-h\pa{x-\Theta_3 }} dx = \tfrac{-i}{t}\pa{\e^{-i\Theta_2t}-\e^{-i\Theta_3t}}\hat h(t)\\ & 
= \tfrac{-i}{t} \e^{-i\Theta_2t} \pa{1-\e^{-i(\Theta_3-\Theta_2) t}}\hat h(t).
\end{align}
Thus
\beq
\abs{\hat f(t)}\le 2 C_h\abs{\tfrac{\sin\pa{\frac 12\pa{ {\Theta_3-\Theta_2}}t}}{t}}\e^{-m_h\abs{t}^\alpha}\qtx{for all } t\in \R.
\eeq
Parts (iv) and (v)  follow. 
\end{proof}

We are ready to prove Theorem~\ref{thmexpclust}.

\begin{proof}[Proof of Theorem~\ref{thmexpclust}]  
Let  $H_\omega$  be a disordered  XXZ spin chain satisfying Property DL.  Let
$K= [\Theta_0, \Theta_2]$, where  $ \Theta_0 < \Theta_2 <\min\set{2 \Theta_0, \Theta_1 }$.
Since  \eq{eq:efcorDL} holds for the interval $[\Theta_0,\min\set{2 \Theta_0, \Theta_1 } ]$,   we assume $\Theta_1\le 2 \Theta_0$  without loss of generality.
We set  $K^\pr= (\Theta_2, \infty)$. 

 Let $X$ and $Y$ be local observables. In view of \eq{X++0}, we can assume $X^{+,+}=Y^{+,+}=0$, and prove  the theorem in this case. For a fixed $L$ (we omit $L$ from the notation), we have
 \begin{align}\label{RK1}
R_{K}  (\tau^K_t\pa{ X },Y)= \pa{\tau^K_t\pa{ X }\bar P_K Y}_K= \pa{\tau^K_t\pa{ X }P_{K^\pr} Y}_K+\pa{\tau^K_t\pa{ X }P_0 Y}_K.
 \end{align}

 Fix $\alpha\in(0,1)$, let    $\Theta_3 \ge 2 \Theta_2 $, to be chosen later,  and  let $f$ be the function given in Lemma ~\ref{lem:filter}.  We have
\begin{align}\label{tXK'Y}
\pa{\tau^K_t\pa{ X }P_{K^\pr} Y}_K = \pa{\tau^K_t\pa{ X }\pa{P_{K^\pr} -f(H)}Y}_K +\pa{\tau^K_t\pa{ X }f(H)Y}_K.
\end{align}

To estimate the first term, note that $P_{K^\pr }-f(H)=g(H)$, where $\abs{g}\le 1$  and    $g(H)= g(H) P_I  + g(H) \bar P\up{\Theta_3} $, where $P\up{\Theta_3}=P_{(-\infty, \Theta_3]}$ and  $\bar P\up{\Theta_3} = 1- P\up{\Theta_3}$.  The term with $g(H) P_I $ can be estimated by  Lemma~\ref{lemPXtgY},
\begin{align}\nn
\E\pa{\sup_{t\in \R}\norm {\pa{\tau^K_t\pa{ X }g(H)P_IY}_K}}&\le \E\pa{\sup_{t\in \R}\norm {\pa{\tau_t\pa{ X }g(H)P_IY}_I}} \\ & \le C\norm{X}\norm{Y}  \e^{-\frac 1 8 m\dist (X,Y)} .\label{RK2}
\end{align}
The contribution of  $g(H) \bar P\up{\Theta_3}$ is estimated by Lemma \ref{lem:Kitaev},
\begin{align}\nn
&\norm{\pa{\tau^K_t\pa{ X }g(H) \bar P\up{\Theta_3}Y}_K}\le  \norm{Y}\norm{P_K X g(H) \bar P\up{\Theta_3}}\\ & \nn \qquad \le  \norm{Y}\norm{P\up{\Theta_2} X g(H) \bar P\up{\Theta_3}}\le  \norm{Y}\norm{P\up{\Theta_2}X  \bar P\up{\Theta_3}}\\ & \qquad \le
 C_F \norm{X} \norm{Y} \e^{-\frac {m_F} {\abs{\cS_X}}\pa{\Theta_3-\Theta_2}}.\label{RK3}
\end{align}

To estimate  the second term on the right hand side of \eq{tXK'Y}, we recall that $H_K=0$ on  $ \supp f$, so 
\begin{align}
\pa{\tau^K_t\pa{ X }f(H)Y}_K=\e^{itH_K} \pa{Xf(H)Y}_K.
\end{align}
it follows from Lemmas~\ref{lemHast}, \ref{leminsertKf}  and \ref{lem:filter}  that 
 \begin{align}\label{RK4}
\pa{\tau^K_t\pa{ X }f(H)Y}_K=  A + T(K_f),
 \end{align}
 where
 \begin{align}\label{RK5}
 \norm{ A}\le 2 C_1  C_\alpha \norm{X} \norm{Y} \max\set{1,\ln \pa{\Theta_3-\Theta_2} }  \e^{- m_1\pa{ \dist (X,Y)}^\alpha},
 \end{align}
 \begin{align}
 T(J) = \e^{itH_K}  \pa{ \int_\R   \e^{-irH} YP_{J} \tau_r\pa{ X} \hat f(r) \, \d r }_K \qtx{for} J\subset\R,
 \end{align}
and 
 \beq
[0,2\Theta_2 - \Theta_1] \subset K_f \subset  [2\Theta_0  -\Theta_3-\pa{\Theta_1-\Theta_2}, 2\Theta_2-\Theta_2 ]\subset   (-\infty, \Theta_2 ].
 \eeq 
In view of \eq{gapcond},  $ P_{K_f}= P_{K_f^\pr} + P_0$, where $K_f^\pr=K_f \cap K$,
 so $T(K_f)=T(K_f^\pr) +T(\set{0})$.  We have
\begin{align}\nn
\E\pa{\sup_{r\in \R}\norm{T({K_f^\pr})}    }&\le \norm{\hat f}_1 \E\pa{\sup_{r\in \R} \norm{\pa{YP_{{K_f^\pr}} \tau_r\pa{ X} }_K}    } \\ & 
\le  C   \max\set{1,\ln \pa{\Theta_3-\Theta_2} }\norm{X}\norm{Y}  \e^{-\frac 1 8 m\dist (X,Y)},\label{RK6}
\end{align} 
where we used Lemmas~\ref{lemPXtgY} and \ref{lem:filter}.  In addition,
\begin{align}
T(\set{0})= \e^{itH_K}\pa{YP_{0}  X }_K = \pa{\tau_t^K\pa{Y}P_{0}  X }_K.\label{RK7}
\end{align}
To see this, let  $E,E^\pr \in K$. Proceeding as in \eq{P0mag},we have
\begin{align}\nn
& P_E \pa{ \int_\R   \e^{-irH} YP_0 \tau_r\pa{ X} \hat f(r) \, \d r }  P_{E^\pr}=\int_\R  P_E \e^{-irH} Y P_0 { X} \e^{-irH}P_{E^\pr}\hat f(r) \, \d r\\ \nn  & \quad = \pa{ \int_\R  \e^{-ir(E +E^\pr)}\hat f(r) \, \d r} P_E YP_0 { X} P_{E^\pr}= f(E+E^\pr )  P_E Y  P_0 { X} P_{E^\pr} \\  & \quad =
 P_E Y  P_0 { X} P_{E^\pr},
\label{P0mag3}
\end{align}
since $f(E+E^\pr)=1$ as $E+E^\pr \in [2\Theta_0,2\Theta_2]\subset [\Theta_1,\Theta_3]$.

Combining  \eq{RK1}, \eq{tXK'Y}, \eq{RK2}, \eq{RK3}, \eq{RK4},\eq{RK5},\eq{RK6}, and \eq{RK7}, we obtain
\begin{align}
&\norm{R_{K}  (\tau^K_t\pa{ X },Y) - \pa{\tau^K_t\pa{ X }P_0 Y}_K-  \pa{\tau_t^K\pa{Y}P_{0}  X }_K}\nn  \\ & \quad
\le C\norm{X}\norm{Y}\pa{ \max\set{1,\ln \pa{\Theta_3-\Theta_2} }  \e^{- m_2\pa{ \dist (X,Y)}^\alpha} +\e^{-\frac {m_F} {\abs{\cS_X}}\pa{\Theta_3-\Theta_2}}},
\end{align}
where $m_2= \min\set{m_1,\frac 1 8 m}>0$.

We now  choose $\Theta_3= \Theta_2 +  {\abs{\cS_X}}\pa{\dist (X,Y)}^\alpha$, note that $\Theta_3 \ge 2 \Theta_2$ if $\dist (X,Y)\ge \Theta_2^{\frac 1\alpha}$, obtaining
\begin{align}
&\norm{R_{K}  (\tau^K_t\pa{ X },Y) - \pa{\tau^K_t\pa{ X }P_0 Y}_K-  \pa{\tau_t^K\pa{Y}P_{0}  X }_K}\nn  \\ & \qquad
\qquad \qquad\le C\norm{X}\norm{Y}\pa {1+ \ln {\abs{\cS_X}} }  \e^{- m_3\pa{ \dist (X,Y)}^\alpha},
\end{align}
with  $m_3=\frac 12  \min\set{m_2,m_F}>0$, for $\dist (X,Y)$ sufficiently large.   
Observing that the argument can be done with $Y$ instead of $X$, we get \eq{eq:expclusteringgen'}.

Since
\begin{align}
\pa{[\tau^K_t\pa{ X },Y]}_K= R_K\pa{ \tau^K_t\pa{ X },Y}- R_K\pa{Y, \tau^K_t\pa{ X }}+ [\tau_t\pa{ X_K },Y_K],
\end{align}
 \eq{eq:dynloc3333} follows immediately from \eq{eq:expclusteringgen'} and \eq{eq:dynloc}.

To conclude the proof, we need to show  that for   a random   XXZ spin chain $H$  the estimates \eq{eq:expclusteringgen'} and   \eq{eq:dynloc3333}  are not true  without the counterterms.   

Suppose \eq{eq:expclusteringgen'} holds without counterterms, even in a weaker form: for all local observables $X$ and $Y$ we have
\begin{align}\nn  
& \E \pa{\sup_{t\in \R}\norm{R_K\pa{ \tau^K_t\pa{ X },Y} }}\\
& \qquad \qquad \le   C\pa{{\min\set{\abs{\cS_{X}},\abs{\cS_{Y}}}}} \|X\| \|Y\|  \Ups\pa{\dist (X,Y)},
\label{eq:expclusteringgen3}
\end{align}
uniformly in $L$, where  the function $\Ups:\N \to [0,\infty)$ satisfies
$\lim_{r\to \infty} \Ups\pa{r}=0$.  Assume  \eq{eq:expclusteringgen'} holds with the same right hand side as  \eq{eq:expclusteringgen3}.  Taking $X=\sigma_i^x$ and $Y=\sigma_j^x$, 
and proceeding   as in \eq{counter4}-\eq{counter44}, we get  (putting $L$ back in the notation)
\begin{align}
&\E \pa{\norm{ \pa{ \sigma_i^x P_0\up{L} \sigma_j^x +\sigma_j^x P_0\up{L}  \sigma_i^x }_K}} \le
4 C\,   \Ups\pa{\abs{i-j}}.\label{cont139}
\end{align}

Recall that (in the notation of the proof of Lemma~\ref{lem:spillterms}, as in \eq{Tnotation}),
\begin{align}
& Z: =  \pa{ \sigma_i^x P_0\up{L} \sigma_j^x +\sigma_j^x P_0\up{L}  \sigma_i^x }_K = 
T\pa{P_K\up{L} \delta_i, P_K\up{L} \delta_j} +   T\pa{ P_K\up{L} \delta_j,P_K\up{L} \delta_i}.
\end{align}
Let $V$ be the two dimensional vector space spanned by the vectors $P_K\up{L} \delta_i$ and $P_K\up{L} \delta_j$, and let $Q_V$ be the orthogonal projection onto $V$. We clearly  have
$Z= Q_V Z Q_V$ and  $\norm{Z}\le 2$, and hence 
\begin{align}
\norm{Z}_2^2 \le 2 \norm{Z}^2  \le 4 \norm{Z} ,
\end{align}
so it follows from \eq{eq:2terms0} that  there exist constants $\gamma_K >0$  and $R_K$ such that
\begin{align}\nn
\E \pa{\norm{ \pa{ \sigma_i^x P_0\up{L} \sigma_j^x +\sigma_j^x P_0\up{L}  \sigma_i^x }_K}} &\ge \tfrac 1 4
\E \pa{\norm{ \pa{ \sigma_i^x P_0\up{L} \sigma_j^x +\sigma_j^x P_0\up{L}  \sigma_i^x }_K}_2^2}\\ & \ge   \tfrac 1 4 \gamma_K,  \label{cont13977}
\end{align}
for all  $i,j \in \Z$ with $\abs{i-j}\ge R_K$.

  Since  \eq{cont139} and \eq{cont13977}   establish a contradiction, we conclude that \eq{eq:expclusteringgen3} cannot hold.

We show the necessity of the counterterms in \eq{eq:dynloc3333} in a similar way.   Note that the counterterm  for $X=\sigma_i^x$ and $Y=\sigma_j^x$  is given by $ Z\up{L}(t)$ as in \eq{Zt}.    If we  assumed the validity of  \eq{eq:dynloc3333} without counterms, we would have
\beq\label{cont13945}
\E \pa{\sup_{t\in \R} \norm{ Z\up{L}(t)}} \le 4 C\,   \Ups\pa{\abs{i-j}},
\eeq
where the function  $\Ups$ is as in \eq{cont139}.  Since $Z\up{L}(t)$ is a rank $4$ operator,
we have
\begin{align}
\norm{Z\up{L}(t)}_2^2 \le 4 \norm{Z}^2  \le 16 \norm{Z} ,
\end{align}
and hence
\begin{align}
\sup_{t\in \R} \norm{ Z\up{L}(t)} \ge  \tfrac 1 {16}\sup_{t\in \R} \norm{ Z\up{L}(t)}_2^2\ge  \tfrac 1 {16} \lim_{T\to \infty} \tfrac 1 T \int_0^T \norm{Z\up{L}(t)}_2^2\, \d t,
\end{align}
so \eq{cont13945} and \eq{eq:4terms1} give a contradiction, an hence \eq{cont13945} cannot hold.
\end{proof}

\appendix

\section{A priori  transition probabilities for spin chains}\label{appspinc}
The following lemma is 
 an adaptation of 
 \cite[Lemma~6.6(2)]{Arad} to our needs. It holds for every spin chain with uniformly norm-bounded next-neighbor  interactions (more generally,  for uniformly norm-bounded interactions of fixed finite range).

 Given a spin chain Hamiltonian $H\up{L} $ and an energy  $E\in \R$,we write
 $P\up{E,L}= \chi_{(-\infty,E]}(H\up{L})$ for the Fermi projection, and let
 $\bar P\up{E,L}=1-P\up{E,L}$.

\begin{lemma}\label{lem:Kitaev}   Let $
H\up{L}= \sum_{i=-L}^{L-1} Y_{i,i+1}\up{L} $ be a spin chain Hamiltonian on $\cH\up{L}=\otimes_{i\in [-L,L]}\C_i^2$,   $\C_i^2=\C^2$ for  $i\in \Z$, where $Y_{i,i+1}\up{L} $ is a local observable with support  $\cS_{Y_{i,i+1}\up{L}}= [i,i+1]$ for $i\in [-L,L-1]$. Suppose
\beq
\max_{i\in [-L,L-1]} \norm{ Y_{i,i+1}\up{L}} \le \theta <\infty.
\eeq
Then there exists constants  $m_F>0$ and  $C_F<\infty $, depending  on  $\theta$, but  and independent of $L$, such that for any local  observable $X$  and energies  $E< E^\pr $ we have 
\beq
\norm{ P\up{E,L}X \bar P\up{E^\pr,L}} \le C_F \norm{X}   \e^{-\frac {m_F} {\abs{\cS_X}}\pa{E'-E}}.
\eeq 
\end{lemma}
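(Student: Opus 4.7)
The plan is to exploit the locality of $X$ via iterated commutators with $H\up{L}$, combined with the spectral gap $\Delta := E' - E$ separating the spectra of the restrictions $H_1 := H\up{L}|_{\Ran P\up{E,L}}$ and $H_2 := H\up{L}|_{\Ran \bar P\up{E',L}}$. Writing $\ell = \abs{\cS_X}$, I first decompose $H\up{L} = A + B$ with $A = \sum_{i : [i,i+1] \cap \cS_X = \emptyset} Y_{i,i+1}\up{L}$ and $B = H\up{L} - A$. Since $A$ acts trivially on $\cS_X$, we have $[A,X]=0$, and $B$ is a sum of at most $\ell + 1$ two-site terms of norm at most $\theta$, so $\norm{B} \le \theta(\ell + 1)$. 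Because the support of each commutator with $H\up{L}$ grows by at most two sites, an easy induction yields
\[
\norm{\mathrm{ad}_{H\up{L}}^k(X)} \le (2\theta)^k \prod_{j=0}^{k-1}(\ell + 2j + 1) \norm{X} \le (4\theta \ell)^k k!\, \norm{X},
\]
using $\ell + 2j + 1 \le 2\ell(j+1)$ for $\ell \ge 1$.

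Next, I view $Y := P\up{E,L} X \bar P\up{E',L}$ as an operator from $\Ran \bar P\up{E',L}$ to $\Ran P\up{E,L}$, and note the Sylvester identity
\[
H_1 Y - Y H_2 = P\up{E,L} [H\up{L}, X] \bar P\up{E',L}.
\]
Since $\sigma(H_1) \subset (-\infty, E]$ and $\sigma(H_2) \subset [E', \infty)$, the unique bounded solution admits the integral representation $Y = -\int_0^\infty \e^{s(H_1 - E)} P\up{E,L}[H\up{L}, X] \bar P\up{E',L}\, \e^{-s(H_2 - E)}\, ds$, which, combined with $\norm{\e^{s(H_1 - E)}} \le 1$ and $\norm{\e^{-s(H_2 - E)}} \le \e^{-s\Delta}$, gives the operator norm bound $\norm{Y} \le \norm{P\up{E,L}[H\up{L}, X]\bar P\up{E',L}}/\Delta$. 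Crucially, this iterates: the same derivation applied to $Z_k := P\up{E,L}\mathrm{ad}_{H\up{L}}^k(X)\bar P\up{E',L}$ in place of $Y$ yields $\norm{Z_{k-1}} \le \norm{Z_k}/\Delta$, and hence $\norm{Y} \le \norm{\mathrm{ad}_{H\up{L}}^k(X)}/\Delta^k$ for every $k \ge 0$.

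Combining the two bounds gives $\norm{Y} \le \norm{X} (4\theta\ell)^k k!/\Delta^k$. Choosing $k \approx \Delta/(4\mathrm{e}\theta\ell)$ and invoking Stirling's formula produces the stated estimate $\norm{Y} \le C_F \norm{X} \e^{-m_F \Delta/\ell}$ with $C_F, m_F > 0$ depending only on $\theta$; the regime $\Delta \lesssim \theta\ell$ is absorbed into $C_F$ via the trivial bound $\norm{Y} \le \norm{X}$. The hardest step will be the Sylvester argument: one must verify that the integral representation converges and yields the operator norm (rather than Hilbert--Schmidt) bound, and that iterating it preserves the off-diagonal block structure. Once this is clear, the combinatorics of the iterated commutator and the final optimization are routine.
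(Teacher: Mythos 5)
Your proof is correct, and it takes a genuinely different route from the paper's. The paper uses a Combes--Thomas--type bound $\norm{\bar P\up{E',L} X P\up{E,L}} \le \e^{-r(E'-E)}\norm{\e^{rH}X\e^{-rH}}$ and then expands $\e^{rH}X\e^{-rH} = \sum_{n\ge 0} \tfrac{r^n}{n!}\mathrm{ad}_H^n(X)$ by Hadamard's lemma, choosing $r = (4\theta\gamma)^{-1}$ so that the series converges geometrically; the exponential rate comes out directly, with no optimization. You instead write the off-diagonal block as the unique solution of a Sylvester equation, use the integral representation and the gap $\Delta = E' - E$ to obtain $\norm{P\up{E,L}X\bar P\up{E',L}} \le \norm{\mathrm{ad}_H^k(X)}/\Delta^k$ for every $k$, and then optimize over $k$ via Stirling. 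Both arguments rest on the same combinatorial input -- the $k$-fold commutator bound $\norm{\mathrm{ad}_H^k(X)} \lesssim (\theta\abs{\cS_X})^k\, k!\,\norm{X}$ (your overcount of the support growth as two sites per step, rather than the sharp one site per step that the paper tracks, only worsens constants) -- and arrive at the same exponential rate $m_F \sim \theta^{-1}$ up to a multiplicative constant. The paper's route avoids the Stirling step by committing to a fixed $r$ in advance, while yours avoids the resummation by sending $k \to$ optimal; the two are dual in the usual Combes--Thomas sense.

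On the step you flagged as hardest: there is no real issue. Uniqueness of the Sylvester solution for operators with disjoint spectra is standard, the integral $\int_0^\infty \e^{s(H_1-E)} W \e^{-s(H_2-E)}\,\d s$ converges absolutely in operator norm because $\norm{\e^{s(H_1-E)}}\le 1$ and $\norm{\e^{-s(H_2-E)}}\le \e^{-s\Delta}$, and the resulting bound $\norm{Y}\le \norm{W}/\Delta$ is an operator norm bound from the start. Iterating the identity preserves the block structure because $P\up{E,L}[H,X]\bar P\up{E',L}$ is again an operator of the form $P\up{E,L}X'\bar P\up{E',L}$ with $X' = [H,X]$, so the same Sylvester argument applies verbatim.
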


Note that if $H=H_\bom$ is a disordered XXZ spin chain,  we can write  (cf. \eq{finiteXXZ})
\beq
H_\bom\up{L}= \sum_{i=-L}^{L-1} Y_{\omega; i,i+1}\up{L} ,
\eeq
where  $Y_{\omega; i,i+1}\up{L} $ is a local observable with  $\cS_{Y_{\omega; i,i+1}\up{L}}= [i,i+1]$, and
\beq
\sup_{\omega \in [0,1]^{\Z}, L\in \N}\max_{i\in [-L,L-1]} \norm{ Y_{\omega; i,i+1}\up{L}} \le \theta= \frac 1 2\pa{1+\tfrac 1 \Delta} +2 \lambda + \beta <\infty.
\eeq

\begin{proof}[Proof of Lemma~\ref{lem:Kitaev}] Let $E< E^\pr $ and let $X$ be a  local  observable.  Without loss of generality we take  $\norm{X}=1$. We proceed as in  \cite[Proof of Lemma~6.6(2)]{Arad}.  For all $r>0$ we have (we omit $L$ from the notation)
\begin{align}\label{Kit2}
\norm{  \bar P\up{E^\pr}X P\up{E}}\le \e^{-r\pa{E^\pr -E}}\norm{\e^{rH} X \e^{-rH}}
\end{align}
Hadamard's Lemma gives
\begin{align}
\e^{rH} X \e^{-rH}= X + \sum_{n=1}^\infty \tfrac {r^n}{n!} ad_H^{\, n} (X), \qtx{where} ad_H ( \cdot)= [H,\cdot],
\end{align}
and hence
\begin{align}
\norm{\e^{rH} X \e^{-rH}} \le 1 +\sum_{n=1}^\infty \tfrac {r^n}{n!} \norm{ad_H^{\, n} (X)}.
\end{align}
Letting $\cS=\cS_X= [s_X,r_X]$,
$\gamma= \abs{\cS_X}=r_X-s_X +1$, and 
$\cS_{j,k}= [s_X-j,r_X+k]\cap [-L,L]$ for $j,k=0,1,2,\ldots$, we can see that
\begin{align}
ad_H (X)= [H,X] = \sum_{j=1}^{J_1} Z\up{1}_j, \qtx{with} J_1\le \gamma+1,
\end{align}
where each  $ Z\up{1}_j=  [Y_{i,i+1} ,X]$ for some $i\in \cS_{1,0}$, so $\norm{Z\up{1}_j}\le 2 \theta$ and either  $\cS_{Z\up{1}_j}\subset  \cS_{1,0}$ or $\cS_{Z\up{1}_j}\subset  \cS_{0,1}$,  and we have 
\begin{align}
\norm{ad_H (X)}\le 2 \theta (\gamma +1).
\end{align}
Using induction, we can show that for $n=1,2,3,\ldots$ (with $J_0=1$)
\begin{align}
ad_H^{\, n} (X)=  \sum_{j=1}^{J_n} Z\up{n}_j,  \qtx{with} J_n\le (\gamma +n) J_{n-1},
\end{align}
where each  $Z\up{n}_j$ is a local observable with   $\norm{Z\up{n}_j}\le \pa{2 \theta}^n$ and   $\cS_{Z\up{n}_j}\subset  \cS_{k,k^\pr}$  for some $k,k^\pr \in \set{0,1,2\ldots}$ with $k+k^\pr \le n$, and we have 
\begin{align}\nn
\norm{ad_H^{\, n} (X)}&\le \pa{2 \theta}^n J_n \le \pa{2 \theta}^n \prod_{k=1}^n (\gamma + k)
\le \pa{2 \theta \gamma}^n \prod_{k=1}^n (1 + k)\\ &= \pa{2 \theta \gamma}^n (n+1)!
\end{align}

We conclude that
\begin{align}
\norm{\e^{rH} X \e^{-rH}}\le  C_r= 1+ \sum_{n=1}^\infty \pa{2 r \theta \gamma}^n (n+1).
\end{align}
Choosing $r= \pa{4 \theta \gamma}^{-1}$, we get $\tilde C= C_{\pa{4 \theta \gamma}^{-1}} <\infty$,
and it follows from \eq{Kit2} that
\beq
\norm{  \bar P\up{E^\pr}X P\up{E}}\le \tilde C  \e^{-\pa{4 \theta \gamma}^{-1}\pa{E^\pr -E}},
\eeq
proving the lemma.
\end{proof}


\end{document}